\newtheorem{theorem}{Theorem}
\newtheorem{corollary}{Corollary}
\newtheorem{lemma}{Lemma}
\newtheorem{proposition}{Proposition}
\newtheorem{assumption}{Assumption}
\newtheorem{definition}{Definition}
\begin{document}
\title{Quantization Games on Social Networks and \\ Language Evolution}

\author{Ankur Mani, Lav~R.~Varshney, and Alex (Sandy) Pentland%
\thanks{Portions of the material in this paper were first presented in \cite{ManiVP2013}.  This work
was supported in part by the National Science Foundation under grant CCF-1717530 and a University of Minnesota Digital Technology Initiative Seed Grant.}%
\thanks{A.~Mani is with the Department of Industrial \& Systems Engineering, University of Minnesota, Minneapolis, MN, 55455 USA (e-mail: amani@umn.edu).}%
\thanks{L.~R.~Varshney is with the Coordinated Science Laboratory and the Department of Electrical and Computer Engineering, University of Illinois at Urbana-Champaign, Urbana, IL, 61801 USA (e-mail: varshney@illinois.edu). He is also currently with Salesforce Research, Palo Alto, CA, USA.}%
\thanks{A.~Pentland is with the Media Laboratory, Massachusetts Institute of Technology, Cambridge, MA, 02139 USA (e-mail: pentland@media.mit.edu).}%
}

\maketitle

\begin{abstract}

We consider a strategic network quantizer design setting where agents must balance fidelity in representing
their local source distributions against their ability to successfully communicate with other connected 
agents. We study the problem as a network game and show existence of Nash equilibrium quantizers. For any agent, under Nash equilibrium, the word representing a given partition region is the conditional 
expectation of the mixture of local and social source probability distributions within the region. Since having knowledge of the original source of information in the network may not be realistic, we show that under certain conditions, the agents need not know the source origin and yet still settle on a Nash equilibrium using only the observed sources. Further, the network may converge to equilibrium through a distributed version of the Lloyd-Max algorithm.  In contrast to traditional results in the evolution of language, we find several vocabularies may coexist in the Nash equilibrium, with each individual having exactly one of these vocabularies. The overlap between vocabularies is high for individuals that communicate frequently and have similar local sources.  Finally, we argue that error in translation along a chain of communication 
does not grow if and only if the chain consists of agents with shared vocabulary.  Numerical results are given.
\end{abstract}
\begin{IEEEkeywords}
Game theory, language evolution, quantization theory, social networks
\end{IEEEkeywords}

\IEEEpeerreviewmaketitle

\newpage

\emph{``Orange in one country is pretty close to orange in another.  One reason for this may be that we see the same sky and the same sun.\dots [However] what some Americans call a large soda some Europeans would call a bucket.''} --- S.~Page \cite[p.~79]{Page2007}

\section{Introduction}
\label{sec:intro}

\IEEEPARstart{W}{hen} representing information, resource constraints often necessitate the quantization of real-valued signals 
into a finite codebook \cite{OrchardB1991,ZegerVG1992, ShlezingerER2019}.  For example, people may use a discrete vocabulary to internally represent the external world 
due to cognitive constraints \cite{SunWGV2012}.  In the traditional approach for performing such quantization, there is a single source
of information to be represented with high fidelity.  Indeed, the apocryphal story of the Eskimoan language 
having a large number of words for snow points to the general principle of \emph{focal vocabularies}: 
specialized sets of distinctions particularly important for a particular focus of experience. 
Adapting language to match local statistics allows a person to represent her environment with greater 
fidelity.

In network settings where several agents interact to represent their local information and also communicate 
information with each other, however, strategic effects arise in the design of quantizers.  The \emph{confusio linguarum}---the 
fragmentation of languages---described in biblical stories, points to the need for \emph{shared vocabularies} 
to enable coordination, collaboration, and exchange in social groups whether to build a city or conduct business.  There needs to be a shared understanding of what words and phrases mean, otherwise groups may become bogged down in ambiguous interpretations of even the most basic of concepts.
Indeed, linguistic distance is associated with less international trade \cite{Lohmann2011}.

Here we study how agents, embedded in a social network, choose a vocabulary to balance interpretation 
loss in communication with others in a social group against loss in individualized representation.  The formalism
we develop to study this problem introduces a 
conflict of interest among agents in the choice of their quantization codebooks and introduces a novel quantization network game.  
Our formulation is quite different from other work in network quantization \cite{FlemingZE2004} in that competition 
among agents is critical and there is no centralized designer of codes, but some local optimality conditions that emerge are similar.  A quantization game for only two agents was developed and studied in \cite{JagerMR2011, OConnor2014, FrankeW2014, LiCalziM2016}, where Voronoi partitions were shown 
to have equilibrium properties; like our approach, the solution concept is based on Nash equilibrium.  
We are only aware of one prior work on game-theoretic quantization with more than two agents \cite{RhimVG2011b}, 
which was concerned with group decision-making in a parallel topology.

Language formation games have also been studied for two agents in the presence of noisy communication channels 
\cite{TouriL2013,HernandezS2014}.  The central results establish generic conditions on when communication may be successful.  
These results can be regarded as contributing to an understanding of signaling games and strategic communication that have been developed in the economics literature (see \cite{CrawfordS1982} and references thereto).  We also consider some level of communication noise.

Besides developing a novel mathematical question in quantizer design with strategic interaction among agents,
our work provides insight into the evolution of language \cite{NowakK1999,NowakKN2002,Niyogi2006,CremerGP2007}. 
Most prior theoretical results point to convergence of language to a single shared vocabulary. 
In contrast we find several vocabularies may coexist under Nash equilibrium
under a variety of settings. Overlap between vocabularies is high for individuals who communicate frequently and have similar 
natural environments, as compared to the agents (like the Eskimos) who communicate less frequently and
have dissimilar local environments. 
This is consistent with the current state of human language where several languages co-exist \cite{Nettle1999},
with the nature of historical language evolution \cite{PagelACM2013}, and with synthetic experimental results \cite{GraesserCK2019}. 

In engineering, there is growing interest in studying the evolution of language for applications in robotics \cite{Steels2011}.  
One approach is through agent-based models, where agents engage in routinized turn-taking 
interactions to develop a common language.  Although agents may be software that operate in virtual worlds, they are primarily 
physical robotic agents that interact with each other in a real world as experienced through a 
sensory-motor system.  Some classical findings are the evolutionary emergence of perceptually grounded 
category lexicons, such as colors \cite{SteelsB2005}.   
In describing how color categorization evolves, three main constraint types are described: \emph{constraints from embodiment}
in the sense of how the sensing apparatus affects what is perceived; \emph{constraints from the environment} in the sense of the 
statistical structure of the environment; and \emph{constraints from culture}, where collective decisions are made by a population.

The formalism we propose mathematizes these qualitative notions and finds best schemes for categorical
communication among agents, whether human or robot.

In our model, agents observe signals in their physical environment as well as in their social environment; their codebooks are grounded 
in their physical and social experiences \cite{RoyP2002,Roy2003}.
Signals in an agent's physical environment are randomly generated from a local continuous source distribution, whereas 
signals in an agent's social environment are received from peers in the network. 
Since agents have finite vocabulary, there are a finite number of possible signals in the 
social environment for each agent. Pairs of peers in the network may have differing frequency of communication. 

Each agent chooses a vocabulary such that distortion due to quantization of signals in 
her physical environment (focal vocabulary) and social environment (shared vocabulary) is minimized. We characterize the equilibrium in which each agent 
draws signals from a mixture of the distributions from her physical and social environments, choosing a vocabulary to minimize loss for this mixture.  
We find that for any agent, under Nash equilibrium, the word that represents a given 
partition region is the conditional expectation of the mixture probability distribution within the region.  In this sense, just a local view on the network is sufficient to develop a good quantization scheme.  

Further characterization is given in the case of cycle-free communication networks among agents. Since the network looks like a forest in this case, for any tree in the forest, the agents may sequentially optimize their vocabularies from root to leaves and still converge to Nash equilibrium. We provide a more general result that says that the agents will converge to the equilibrium irrespective of the sequence in which they optimize their vocabularies and may even use Lloyd-Max iterations instead of full optimization.

Even when individuals have the same codebook, each individual in the social network may have slightly different 
partition regions for the words in the codebook.  This provides an explanation for why the original 
meaning is often lost in translation when communication happens through a long chain of individuals, see Fig.~\ref{fig:loss}.
We find that the error in translation along a chain of communication does not grow if and only if the 
chain consists of agents with shared vocabulary.

\begin{figure}
  \centering
  \includegraphics[width=2.6in]{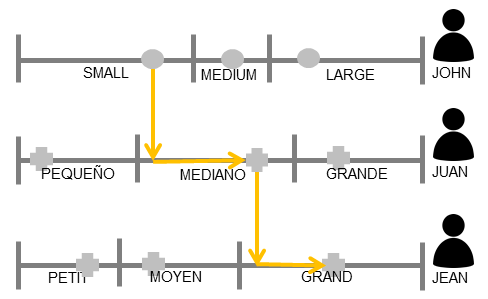}
  \caption{A schematic depiction of the loss in translation phenomenon through a chain of communication, as the word \emph{small}
ends up being interpreted as \emph{grand}.}
  \label{fig:loss}
\end{figure}

Further characterization of the Nash equilibrium shows agents in the network may converge to
equilibrium through a distributed Lloyd-Max algorithm \cite{GershoG1992,Max1960}. 
Dynamics under the
distributed Lloyd-Max algorithm have relations to prior investigations of language evolution \cite{BaronchelliFLCS2006}, 
but we consider the balance of focal and shared vocabularies rather than seeing the convergence to shared
vocabularies.  Numerical examples are given.

The present paper expands on the first presentation of this work \cite{ManiVP2013} by generalizing technical assumptions and clarifying the proofs of results.  It also formalizes results on translation chains and includes numerical results to provide insight.

\section{Problem Statement}
\label{sec:problem}

Consider a set of $N$ agents.  Each agent $i = 1,\ldots,N$ observes signals in an uncertain environment. 
The agents either receive signals directly from their own physical environment or indirectly from other agents via their 
social environments. The random vector $\mathbf{z}_i \in \{0, 1\}^N$ with the constraint that exactly one element of $\mathbf{z}_i$ is $1$ captures the source of the signal for agent $i$. If the signal is coming from agent $j$ then the $j$th element, $z_i^j = 1$ while other elements are zero. On the other hand if the agent $i$ receives the signal from its physical environment then $z_i^i = 1$ while the rest of the elements are zero. The relative frequency of communication between agents is captured by the stochastic communication 
matrix $\mathbf{P}$, where $P_{ij} = \Pr(z^j_i = 1)$ is the relative frequency with which agent $i$ receives a signal from agent $j$.\footnote{The $\mathbf{P}$ 
matrix may be sparse since agents often communicate only with a small set of neighbors due to limited time and the desire to avoid
information overload \cite{MiritelloMLMBRD2013}, but our results do not require sparsity.}

The physical environment of any agent $i$ may be unique and is represented by a random variable $X^{(p)}_i$ that takes 
values in the alphabet $\mathcal{X} = (0,1)$ and has a continuous probability distribution with density 
function $p^{(p)}_i$.  The true environment of agent $i$ is a mixture of its physical environment and social environment and is represented by the random variable 
\[
X^{(sp)}_i = z^i_i X^{(p)} + \sum_{j \neq i}{z^j_i X^{(sp)}_j}\mbox{.}
\]
The density function $p^{(sp)}_i$ of $X^{(sp)}_i$ is a mixture that satisfies the following:
\begin{equation}
p^{(sp)}_i(x) = P_{ii}p^{(p)}_j(x) + \sum_{j \neq i}{P_{ij}p^{(sp)}_i(x)}
\end{equation}
for all $x \in (0,1)$.  One might wonder why signals from different sources are not handled separately, e.g.\ with individualized decoders, but this would be taxing for agents with bounded ability.

Since agents have bounded ability, they operate with a 
vocabulary of at most $M$ words. An agent $i$ maps the observed signals in her environment onto a vocabulary 
$\mathcal{Y}_i = \{y^1_i,\dots,y^M_i\} \subset [0, 1]$ ($y^l_i \leq y^{l+1}_i$ for all $l < M$) through her word map
$q_i : \mathcal{X} \rightarrow \mathcal{Y}_i$. The agent's decision consists of the chosen vocabulary and the word map, i.e.\ a reproduction alphabet and a quantization strategy. Note that the choice of words is in the closed set $[0, 1]$ rather than the open set $(0, 1)$; this is purely for technical reasons and as we will show later, rational agents will never choose $0$ or $1$ as a word.

Agents transmit to other agents using words from their own vocabulary. We assume that the communication is noisy and the transmitting agent's words are distorted by a zero-mean additive noise with absolutely continuous distribution. 

Since the transmitter and receiver may have different vocabularies, the receiver maps the transmitter's words into her own vocabulary through her word map. One might wonder how the receiver can do the word mapping, since in typical studies of quantization systems, abstract indices are transmitted.  Here we think of representation alphabets that are subsets of the real line, and so a real-valued letter from the transmitter's representation alphabet is passed through the mapping $q_i$ to produce a letter in the receiver's representation alphabet, cf.~Fig.~\ref{fig:loss} for a depiction.  This seems reasonable for communicating concepts like distance, size, degree, concentration, color, time duration, and temperature \cite{OConnor2014}, which motivate our study.

The communicated signals and physical observations together form the observed environment $\hat{X}^{(sp)}_i$ with density function $\hat{p}^{(sp)}_i$ for each agent $i = 1,\ldots, N$, see Fig.~\ref{fig:blockdia}.  Given $z^j_i = 1$ and $q_j(\hat{X}^{(sp)}_j) = y^k_j$, $\hat{X}^{(sp)}_i = y^k_j + \eta$ where $\eta$ is a zero mean communication noise. The density of the observed signal $\hat{X}^{(sp)}_i$ given $z^j_i = 1$ and $q_j(\hat{X}^{(sp)}_j) = y^k_j$ is $f^k_{j}$. This $f^k_{j}$ has support on $(0, 1)$ and is continuous with mean $y^k_j$ and standard deviation $\sigma^k_j < \min\{y^k_j, 1 - y^k_j\}$. Thus observed signals in the social environments of agents are noisy versions of true signals (corrupted by both quantization and additive noise).
These observed signals depend on the path of communication to the receiver from the agent that originally observed the signal in her physical 
environment, see Fig.~\ref{fig:network}.
Note that since the observed social signal depends on the path of communication from the agent that observed the physical signal 
to the agent that finally received the signal, the same true signal may be mapped onto more than one word depending on the path. The density $\hat{p}^{(sp)}_i$ is also a mixture.
The probability that agent $i$ uses the word $y^k_i$ is 
\[
p^k_i = \int_{q^{-1}_i(y^k_i)}{\hat{p}^{(sp)_i}(x)dx}
\]
and therefore the density of the observed signal is
\[
\hat{p}^{(sp)}(x) = P_{ii}p^{(p)}_i(x) + \sum_{j \neq i, k \in \{1, \dots, M\}}{P_{ij}p^k_j f^k_{j}(x)}\mbox{.}
\]

\begin{figure}
  \centering
  \includegraphics[width=3.5in]{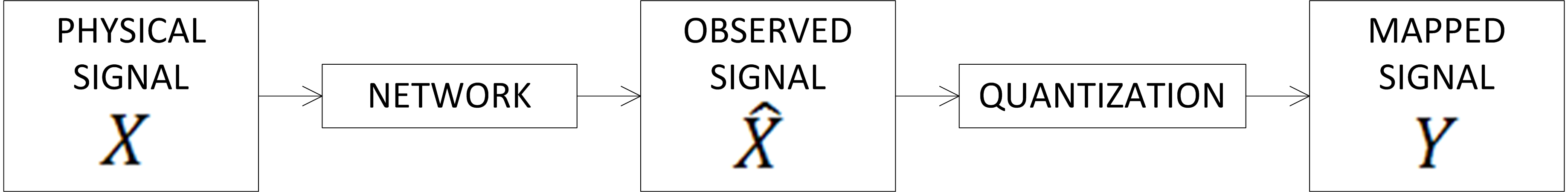}
  \caption{A diagram of how signals are modified as they flow through the network.}
  \label{fig:blockdia}
\end{figure}

\begin{figure}
  \centering
  \includegraphics[width=3in]{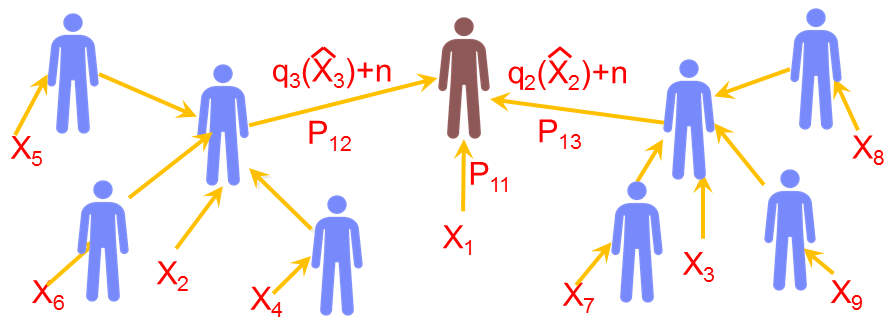}
  \caption{A depiction of how signals take paths through the network.}
  \label{fig:network}
\end{figure}

Since the vocabulary $\mathcal{Y}_i$ is uniquely identified by $q_i$, the strategy of an agent is simply her choice of mapping $q_i$. The  strategy profile $\mathbf{q} = \{q_i\}_{i \in \{1 \dots N\}}$.

The agent's loss due to distortion is measured using $d_i : \mathbb{R}\times\mathbb{R} \rightarrow \mathbb{R}$, where 
$d_i(x, a)$ is strictly convex and coercive in $a$ for all $x$, and symmetric around the true signal $x$. Therefore, the distortion 
is minimized when $a = x$ and strictly increases as $a$ goes farther from $x$. For ease of exposition in the sequel, we assume that the 
distortion is quadratic, i.e.\ $d_i(x, a) = (x - a)^2$. However, all our results hold for any strictly convex, coercive, and symmetric distortion measure, e.g.\ Bregman divergences \cite{BanerjeeMDG2005}.

The loss of risk-neutral agents under the strategy profile $\mathbf{q}$ is the expected distortion in the representation of the true signal:
\begin{align}
\Delta^{(sp)}_i(\mathbf{q}_{-i}, q_i) &= \mathbb{E}\left[d_i\left(X^{(sp)}_i - q_i\left(\hat{X}^{(sp)}_i\right)\right)\right] \\ \notag
&= \mathbb{E}\left[\left(X^{(sp)}_i - q_i\left(\hat{X}^{(sp)}_i\right)\right)^2\right] \mbox{.}
\end{align}
Notice this expected distortion depends on the quantizers of all agents, since $\hat{X}^{(sp)}_i$ is present in the expression. The problem of the risk-neutral agent is to choose $q_i$ that minimizes the above loss. A strategy profile $\mathbf{q}^*$ is a Nash equilibrium if $$q^*_i = \arg\min_{q_i}{\Delta^{(sp)}_i(\mathbf{q}^*_{-i}, q_i)} \mbox{ for all } i \in \{1, \dots, N\}.$$

In this paper, we restrict attention to equilibria in regular quantizers, where partition regions are intervals and words lie within their partition regions. 
\begin{definition}
A quantizer $q_i : \mathcal{X} \rightarrow \mathcal{Y}_i$ for agent $i$ is defined to be \emph{regular} if letters in its representation alphabet $\{y^k_i\}_{k = \{1,2,\ldots,M\}}$ have corresponding partition cells $\{\mathcal{R}^k_i, k = 1, 2, \ldots, M\}$ that are each intervals 
\[
\mathcal{R}^k_i = (a^{k-1}_i,a^k_i] = \{r: a^{k-1}_i < r \le a^k_i\}, k = 1, 2, \ldots, M,
\]
such that for all $k \in \{1, 2, \ldots, M\}$
\[
y^k_i \in (a^{k-1}_i,a^k_i)\mbox{ and } q_i (r) = y^k_i \mbox{ for all } r \in \mathcal{R}^k_i
\]
\end{definition}
We will only consider strategy profiles consisting of regular quantizers for each agent. We will show that such an equilibrium in regular quantizers exists and therefore it is reasonable to study equilibria in regular quantizers and their properties.

We note that the loss of any agent $i$ under a strategy profile $\mathbf{q}$ has three components.
\begin{itemize}
\item The first component, expected \emph{quantization loss} is due to the quantization by agent $i$ and is the expected error in representation of the observed signal:
\begin{align}
\hat{\Delta}^{(sp)}_i(\mathbf{q}_{-i}, \mathbf{q}_i) &= \mathbb{E}\left[d_i\left(\hat{X}^{(sp)}_i - q_i\left(\hat{X}^{(sp)}_i\right)\right)\right] \\ \notag
&= \mathbb{E}\left[\left(\hat{X}^{(sp)}_i - q_i\left(\hat{X}^{(sp)}_i\right)\right)^2\right] \mbox{.}
\end{align}

\item The second component, the expected \emph{communication loss} is due to the noise introduced by the communication of words in the network and is the expected error in the observed signal:
\begin{align}
\tilde{\Delta}^{(sp)}_i(\mathbf{q}_{-i}, \mathbf{q}_i) &= \mathbb{E}\left[d_i\left(X^{\left(sp\right)} - \hat{X}^{(sp)}_i\right)\right] \\ \notag
&= \mathbb{E}\left[\left({X}^{(sp)}_i - \hat{X}^{(sp)}_i\right)^2\right] \mbox{.}
\end{align}

\item The third component is due to correlation between quantization loss and communication loss. In the well-behaved communication networks we consider (see Assumption~\ref{ass:3}), this component will be much smaller and in most cases absent. Therefore, we will ignore this component later.  This is much like mixed distortion in joint source-channel coding that is often zero \cite{KnagenhjelmA1996}.

\end{itemize}

Having specified the players, their possible strategies, and their interlinked payoffs, in the next section we discuss how agents 
choose strategies.  We also make  further technical assumptions.

\begin{assumption}
\label{ass:1}
The semi-elasticity of $p^{(p)}_i(x)$, i.e.\ $\tfrac{\partial \log(p^{(p)}_i(x))}{\partial x}$ is decreasing in $x$ for all $i =1,\ldots, N$.
\end{assumption}
This assumption is equivalent to requiring the density function be log-concave and is widely used in quantization theory to establish uniqueness of locally optimal quantizers \cite{Trushkin1982}.
Many beta distributions and most thin-tailed distributions such as those in the exponential family satisfy this condition.
\begin{assumption}
\label{ass:2}
The semi-elasticity of $p^{(sp)}_i(x)$ and $\hat{p}^{(sp)}$ are decreasing for all choices of strategy profiles $\mathbf{q}$, i.e.\ $\tfrac{\partial \log(p^{(sp)}_i(x))}{\partial x}$ and $\tfrac{\partial \log(\hat{p}^{(sp)}_i(x))}{\partial x}$ are decreasing in $x$ for all $i =1,\ldots, N$.
\end{assumption}
This assumption holds when the communication matrix is diagonal dominant\footnote{Indeed, it is common for people to spend some time interacting with their environment and some time in communication with others \cite{ChuiMBDRSSW2012}, and moreover to have several 
conversation partners with whom they speak regularly \cite{EagleP2006}.} and the agents with high communication probability have similar physical environments.  These assumptions ensures that the loss function is continuous as well as the best response $q^*_i(\mathbf{q}_{-i})$ of agent $i$ to the strategies of other agents $\mathbf{q}_{-i}$ is unique in the strategies of other agents.


\section{Main Results}
\label{sec:results}

The first step in characterizing quantization strategies is to show the existence of a Nash equilibrium in regular quantizers, using continuity and fixed point arguments.
\begin{proposition}
There exists a pure strategy Nash equilibrium in which all agents have regular quantization strategies.
\end{proposition}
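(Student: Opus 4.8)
\emph{Proof strategy (proposal).} The plan is the classical route to Nash existence: build a compact convex strategy space, use continuity of the losses to obtain a single-valued continuous best-response map, apply Brouwer's fixed point theorem, and then argue the resulting fixed point actually lies in the class of regular quantizer profiles. The serious work is in the last step.

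First I would coordinatize the strategies. A regular quantizer of agent $i$ is determined by thresholds $0 = a^0_i \le a^1_i \le \cdots \le a^{M-1}_i \le a^M_i = 1$ together with words $y^k_i \in (a^{k-1}_i,a^k_i)$, which embeds the set of regular quantizers into $[0,1]^{2M-1}$. This set is not closed (words must be strictly interior to their cells, and cells may collapse), so I would pass to its closure $\bar{\mathcal{S}}_i$ — allowing $y^k_i \in [a^{k-1}_i,a^k_i]$ and coincident thresholds — which, being cut out by linear inequalities inside a cube, is nonempty, compact, and convex; hence so is the product $\bar{\mathcal{S}} = \prod_i \bar{\mathcal{S}}_i$. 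Every point of $\bar{\mathcal{S}}$ still induces well-defined observed densities $\hat{p}^{(sp)}_i$ and losses $\Delta^{(sp)}_i$.

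Next I would invoke the properties already guaranteed by Assumption~\ref{ass:2}: the loss $\mathbf{q} \mapsto \Delta^{(sp)}_i(\mathbf{q})$ is continuous on $\bar{\mathcal{S}}$, and for each fixed $\mathbf{q}_{-i}$ the minimizer over $\bar{\mathcal{S}}_i$ is unique. The one place continuity genuinely needs checking is the coupling between agents: agent $i$'s mixture $\hat{p}^{(sp)}_i(x) = P_{ii}p^{(p)}_i(x) + \sum_{j \neq i,\,k} P_{ij} p^k_j f^k_j(x)$ depends on agent $j$'s thresholds only through the word probabilities $p^k_j$ and the words $y^k_j$, and since $\hat{p}^{(sp)}_j$ is a continuous atomless density and the $f^k_j$ are continuous, these vary continuously, so $\hat{p}^{(sp)}_i$ moves continuously in $L^1$ and the expected quadratic distortions do too; the communication-loss term is handled identically and the correlation term is negligible under the well-behavedness assumptions on the network. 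Existence of a minimizer over the compact $\bar{\mathcal{S}}_i$ is then immediate, uniqueness makes the best response a genuine function $B_i$, and Berge's maximum theorem promotes continuity of the loss to continuity of $B_i$. Thus $B = (B_1,\ldots,B_N)\colon \bar{\mathcal{S}} \to \bar{\mathcal{S}}$ is continuous on a nonempty compact convex set, and Brouwer's fixed point theorem yields $\mathbf{q}^*$ with $B(\mathbf{q}^*) = \mathbf{q}^*$, which is a pure-strategy Nash equilibrium.

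The remaining step — and, I expect, the delicate one — is to show $\mathbf{q}^*$ lands in the genuine (open) stratum of regular quantizers rather than on the degenerate boundary of $\bar{\mathcal{S}}$. Given any $\mathbf{q}_{-i}$, the observed density $\hat{p}^{(sp)}_i$ is continuous and strictly positive on $(0,1)$, so any best response must satisfy the usual Lloyd--Max conditions: its partition is the nearest-neighbour partition of the line and hence consists of intervals; each word is the conditional expectation of $\hat{p}^{(sp)}_i$ on its cell, which lies strictly interior to the cell whenever the cell has positive probability; and no cell can be empty or degenerate, since strict convexity and coerciveness of $d_i$ make splitting any positive-probability interval strictly beneficial and collapsing a cell strictly harmful. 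Hence $B_i(\mathbf{q})$ is regular for every $\mathbf{q}$, so $\mathbf{q}^*$ is a Nash equilibrium in regular quantizers. The parts most likely to require care are the uniform exclusion of empty cells and the joint continuity of the losses through the discrete word channel that couples the agents.
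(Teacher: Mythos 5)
Your overall plan matches the paper's: establish continuity of the loss in the strategy profile, invoke uniqueness of best responses (from Assumptions~\ref{ass:1} and \ref{ass:2}, via \cite{Trushkin1982}), apply Berge's maximum theorem to get a continuous best-response map, invoke Brouwer, and then argue the fixed point is a regular quantizer profile. You also supply two pieces of explicit scaffolding the paper leaves implicit: a coordinatization of the strategy space as a closed convex polytope in $[0,1]^{2M-1}$ so that Brouwer's hypotheses are manifestly satisfied, and a constructive argument (via the nearest-neighbor/centroid conditions, strict convexity, coercivity, and strict positivity of $\hat{p}^{(sp)}_i$) that the fixed point lies in the open stratum of regular quantizers, where the paper simply cites \cite[p.~355]{GershoG1992}. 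Those additions are welcome and correct.

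The genuine gap is in the continuity step, which is where the paper concentrates its effort. Your argument — ``since $\hat{p}^{(sp)}_j$ is a continuous atomless density and the $f^k_j$ are continuous, these vary continuously'' — is circular when the communication graph has cycles. The observed density $\hat{p}^{(sp)}_j$ is not a primitive object: it is defined by a fixed-point relation that couples all the observed densities through the communication matrix, so $\hat{p}^{(sp)}_j$ can depend on $q_i$ itself, and the continuity you invoke is exactly what needs to be proved. The paper handles this by expanding the loss as a geometric series over the number of cycles through agent $i$, bounding the per-cycle perturbation probability $\rho_i(q'_i,q_i,\mathbf{q}_{-i})$, and showing the resummed series converges to something that vanishes as $\epsilon \to 0$. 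Your sketch works for acyclic networks by induction from sources to sinks, but as written it does not cover the loopy case the proposition claims. You flagged this as the delicate spot, but the flag does not close the gap: you would need either the paper's series argument or an alternative (e.g.\ a contraction/operator-norm argument showing the map from strategy profiles to the tuple of observed densities is well-defined and continuous) before Berge's theorem can be applied.
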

\begin{IEEEproof}
We will first show that the loss is continuous in the strategy profile. For each agent $i$ and $n \geq 0$, we define $P^n_i$ as the probability that agent $i$ observes a signal that travelled through a communication path with $n$ cycles containing the agent $i$. Therefore, $P^{n+1}_i = P^n_i P^1_i = \left(P^1_i\right)^{n+1} P^0_i$. The loss can be represented as:
\[
\Delta^{(sp)}_i(\mathbf{q}_{-i}, q_i) = \sum^{\infty}_{n = 0}P^n_i \mathbb{E} \left[ \left(X^{(sp)}_i - q_i\left(\hat{X}^{(sp)}_i\right)\right)^2 | \mbox{ path through } n \mbox{ cycles containing } i \right]
\]

When only agent $i$ changes her vocabulary within an infinitesimal ball of radius $\epsilon$, the change in loss for agent $i$ is also infinitesimal. To observe this, denote the new mapping for $i$ as $q^{\prime}_i$ and the new vocabulary for $i$ as $Y^{\prime}_i$.

The change in expected distortion of signals that come to $i$ for the first time is:
\begin{align*}
|\mbox{err}^0_i| & = \mathbb{E} \left[ \left(X^{(sp)}_i - q^{\prime}_i\left(\hat{X}^{(sp)}_i\right)\right)^2 | \mbox{ path through } 0 \mbox{ cycles containing } i \right]\\
&\qquad - \mathbb{E} \left[ \left(X^{(sp)}_i - q_i\left(\hat{X}^{(sp)}_i\right)\right)^2 | \mbox{ path through } 0 \mbox{ cycles containing } i \right]\\
& \leq O(\epsilon^2) + \sum_{j \neq i}\Pr\left(\|q^{\prime}_i(Y_j + \eta) - q_i(Y_j + \eta )\| \geq \epsilon \right)
\end{align*}
The first term of order $O(\epsilon^2)$ is introduced to all signals due to the shift of the words and to the physical environment signals of agent $i$, at 
the boundaries of the partitions that move to the neighboring partitions. The second term is the probability that the signals in the physical environment 
are mapped onto neighboring partitions. Here the error in the representation of any such signal is bounded by $1$. Clearly, the above error approaches 
$0$ as $\epsilon$ approaches $0$.

The probability of not introducing changes in error due to translation when the signal travels through one cycle containing $i$ is:
\begin{equation}
\rho_i(q^{\prime}_i, q_i, \mathbf{q}_{-i}) \geq \left(\sum_{j \neq i}\Pr\left(\|q_j(Y^{\prime}_i + \eta) - q_j(Y_i + \eta )\| \leq \epsilon \right) \Pr\left(\|q^{\prime}_i(Y_j + \eta ) - q_i(Y_j + \eta )\| \leq \epsilon \right)\right) \mbox{,}
\end{equation}
which approaches $1$ as $\epsilon \rightarrow 0$. For any $n \geq 1$, the contribution to the expected distortion of the signals that come to $i$ through the path with at most $n$ cycles containing $i$ is
\begin{align*}
|\mbox{err}^n_i| &\leq \sum^{n}_{k = 0}\left(P^1_i\right)^{k} P^0_i \left(|\mbox{err}^0_i|  \left(\rho_i(q^{\prime}_i, q_i, \mathbf{q}_{-i})\right)^{k}\right) + \sum^{n}_{k = 0}\left(P^1_i\right)^{k} P^0_i \left(1 -  \left(\rho_i(q^{\prime}_i, q_i, \mathbf{q}_{-i})^{k}\right)\right)\\
& = P^0_i|\mbox{err}^0_i| \frac{\left(1 - \left(P^1_i \rho_i(q^{\prime}_i, q_i, \mathbf{q}_{-i})\right)^{n+1}\right)}{\left(1 - P^1_i \rho_i(q^{\prime}_i, q_i, \mathbf{q}_{-i})\right)} + P^0_i \frac{\left(1 - \left(P^1_i \rho_i(q^{\prime}_i, q_i, \mathbf{q}_{-i})\right)^{n+1}\right)}{\left(1 - P^1_i \rho_i(q^{\prime}_i, q_i, \mathbf{q}_{-i})\right)} - P^0_i \frac{\left(1 - \left(P^1_i \right)^{n+1}\right)}{\left(1 - P^1_i\right)}
\end{align*}
where the first term is for signals in which cycles do not introduce any error due to translation along the path and the second and third terms form the bound on the error introduced due to translation in the $n$ cycles. The total change in the loss for agent $i$ is
\begin{align*}
|\Delta^{(sp)}_i(\mathbf{q}_{-i}, q^{'}_i) - \Delta^{(sp)}_i(\mathbf{q}_{-i}, q_i)| &= \lim_{n \rightarrow \infty}{|\mbox{err}^n_i|}\\
& = P^0_i|\mbox{err}^0_i| \frac{1}{\left(1 - P^1_i \rho_i(q^{\prime}_i, q_i, \mathbf{q}_{-i})\right)} + P^0_i \frac{1}{\left(1 - P^1_i \rho_i(q^{\prime}_i, q_i, \mathbf{q}_{-i})\right)} - P^0_i \frac{1}{\left(1 - P^1_i\right)}
\end{align*}
which approaches $0$ as $\epsilon \rightarrow 0$.

Therefore the loss function of $i$ is continuous in its own strategy. Similarly the loss function of all other agents is continuous in the strategy of $i$. Combining arguments for the finite network, loss functions of all agents are continuous in strategy profile $\mathbf{q}$. Assumptions~\ref{ass:1} and \ref{ass:2} imply that the best response exists and is unique \cite{Trushkin1982}.

Therefore following the theorem of the maximum \cite{Berge1963}, the best response function $\mathbf{q}^*(\mathbf{q})$, where for any $i$, $q^*(\mathbf{q}) = \arg\min_{q^{\prime}_i} \Delta^{(sp)}_i(\mathbf{q}_{-i}, q^{'}_i)$ is non-empty, and continuous. Therefore, following Brouwer's fixed point theorem \cite{Brouwer1911}, a fixed point of the best response function exists.

We also need to show no words belong to $\{0,1\}$ for any fixed point of the best response correspondence for any $i$. This follows  since words do not lie on boundaries of local minima partitions \cite[p.~355]{GershoG1992}. Therefore there exists a Nash equilibrium consisting of regular quantization strategies 
for all agents.
\end{IEEEproof}

For the sequel, we make an assumption that eliminates the correlation between communication loss and quantization loss. In particular, we restrict the set of rational quantizers or the set of quantizers that are not strictly dominated to a well-behaved set.
We first consider the optimal quantizers, $\mathbf{q}^{\circ}$ of all agents in the absence of the communication network, i.e., $$q^{\circ}_i = \arg\min_{q_i}\mathbb{E}\left[d_i\left(X^{(p)}_i - q_i\left({X}^{(p)}_i\right)\right)\right] \\
= \arg\min_{q_i}\mathbb{E}\left[\left(X^{(p)}_i - q_i\left({X}^{(p)}_i\right)\right)^2\right] \mbox{.}$$
\begin{assumption}
\label{ass:3}
We assume the following about the optimal quantizers and the best responses in the network. There exists $\epsilon > 0$ such that
\begin{itemize}
\item for any two agents, $i, j$, with $P_{ij} > 0$ and any $k, l \in \{1, \dots M\}$, the words of $i$ under the quantizer $\mathbf{q}^{\circ}$ are sufficiently far away from the partition boundaries of $j$ in $q^{\circ}_j$, i.e., 
$$\min_{k,l}|y^{\circ k}_i - a^{\circ l}_j| > \epsilon.$$
\item for any agent, $i$, and any $k \in \{1, \dots M\}$ the best response of the agents given any strategy of other agents has limited perturbation from $\mathbf{q}^{\circ}$, i.e., 
$$\max_{k}|y^{\circ k}_i - y^{* k}_i(\mathbf{q})| < \epsilon/2.$$
\item the support of the additive noise in the communication for all agents and all words is less than $\epsilon/2$, i.e.\ $\eta < \epsilon/2$ almost surely.
\end{itemize}
\end{assumption}
These assumptions ensure the zero probability boundary condition 
\cite[p.~355]{GershoG1992} is satisfied and that the interaction between quantization loss and communication loss is zero. These assumptions imply that for any agent $i$, any quantizer $q_i$ with $$\max_{k}|y^{\circ k}_i - y^{ k}_i| \geq \epsilon/2$$ is strictly rationally dominated and the agent $i$  will never choose $q_i$. Therefore, in the sequel we will restrict to the quantization strategies $Q_i = \{q_i : \max_{k}|y^{\circ k}_i - y^{ k}_i| < \epsilon/2\}$ for agent $i$.  This set of quantization strategies are stable with respect to mapping of signals in social environment or are \emph{socially stable}, i.e.\ for any two agents $i,j$ with $P_{i,j} > 0$ and any two indices, $k,l \in \{0, \dots, M\}$, \begin{equation}\label{eq:socially_stable}
q_i(y^l_j) = q^{\circ}_i(y^{\circ l}_j)
\end{equation} for all quantization strategies $q_i \in Q_i$ and $q_j \in Q_j$.   
Under these assumptions, we now characterize the equilibrium. We first show that the best response functions for all agents satisfy the centroid condition with respect to the true signals in the environment.
\begin{lemma}
\label{lem:cent}
A quantizer $q^l_i \in Q_i$ is the local minimum of the loss function for agent $i$, for fixed strategies $\mathbf{q^l}_{-i} \in \mathbf{Q}_{-i}$ of all other agents, if and only if it satisfies the centroid condition: $\mathbb{E}\left[X^{(sp)}_i | q^l_i\left(\hat{X}^{(sp)}_i\right) = y^k_i\right] = y^k_i$.
\end{lemma}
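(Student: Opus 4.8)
The plan is to peel away the network feedback first and then recognise the statement as the textbook Lloyd--Max centroid computation. \emph{Step 1 (decoupling).} When agent $i$ perturbs $q_i$ within the socially stable set $Q_i$, Assumption~\ref{ass:3} together with the socially stable identity~\eqref{eq:socially_stable} freezes both the way $i$'s own words are re-quantized by every neighbour and the way $i$ re-quantizes incoming social words; moreover, by exactly the continuity estimates used in the proof of Proposition~1 (and the zero-probability boundary condition it establishes), signals that return to $i$ through cycles and the quantization/communication correlation term do not react to such a perturbation to first order. Hence for the purpose of local optimality I may treat the joint law of $\bigl(X^{(sp)}_i,\hat X^{(sp)}_i\bigr)$ as fixed and work with $\Delta^{(sp)}_i(q_i)=\mathbb{E}\bigl[(X^{(sp)}_i-q_i(\hat X^{(sp)}_i))^2\bigr]$ alone.

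\emph{Step 2 (bias--variance split and the centroid condition).} Put $m(\hat x):=\mathbb{E}\bigl[X^{(sp)}_i\mid \hat X^{(sp)}_i=\hat x\bigr]$. Since $q_i(\hat X^{(sp)}_i)$ is a function of $\hat X^{(sp)}_i$, the tower property kills the cross term and gives
\[
\Delta^{(sp)}_i(q_i)=\mathbb{E}\bigl[(X^{(sp)}_i-m(\hat X^{(sp)}_i))^2\bigr]+\sum_{k=1}^{M}\int_{\mathcal{R}^k_i}\bigl(m(\hat x)-y^k_i\bigr)^2\,\hat p^{(sp)}_i(\hat x)\,d\hat x ,
\]
where the first term does not involve $q_i$. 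Because $\hat p^{(sp)}_i$ is strictly positive on each interval cell $\mathcal{R}^k_i$, the $k$-th summand is a strictly convex, coercive quadratic in $y^k_i$, whose unique minimiser is $y^k_i=\mathbb{E}\bigl[m(\hat X^{(sp)}_i)\mid \hat X^{(sp)}_i\in\mathcal{R}^k_i\bigr]=\mathbb{E}\bigl[X^{(sp)}_i\mid q_i(\hat X^{(sp)}_i)=y^k_i\bigr]$, i.e.\ the centroid condition. So, with the partition held fixed, a quantizer is locally (indeed globally) optimal in its word values if and only if it is a centroid quantizer.

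\emph{Step 3 (closing the ``if and only if'').} Differentiating the same expression in a cell boundary $a^k_i$ produces the complementary nearest-neighbour condition $m(a^k_i)=\tfrac12\bigl(y^k_i+y^{k+1}_i\bigr)$; here the monotonicity of $m$, which follows from the log-concavity in Assumptions~\ref{ass:1}--\ref{ass:2}, is what guarantees the optimal partition is an interval partition. Under those same assumptions the resulting stationarity system has a unique solution by \cite{Trushkin1982}, and since $\Delta^{(sp)}_i$ is continuous and coercive on $Q_i$ this solution is the global, hence the unique local, minimiser; consequently a quantizer in $Q_i$ is the local minimum precisely when it satisfies the centroid condition.

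I expect \emph{Step 1} to be the real obstacle: everything after it is a standard variational argument, but justifying that the restriction to $Q_i$ and Assumption~\ref{ass:3} genuinely neutralise the dependence of $\hat X^{(sp)}_i$ on $q_i$ (through cycles) and make the quantization--communication cross term inert requires carefully re-using the $P^n_i$-decomposition and the continuity bounds from the proof of Proposition~1.
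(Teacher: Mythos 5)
Your proposal is correct and follows essentially the same route as the paper: use Assumption~\ref{ass:3} and the socially stable identity~\eqref{eq:socially_stable} to freeze the joint law of $(X^{(sp)}_i,\hat X^{(sp)}_i)$ as $q_i$ ranges over $Q_i$, then recognize the remaining optimization as the classical Lloyd--Max problem and read off the centroid condition. Where the paper disposes of this in three sentences, you supply the full bias--variance decomposition and the explicit variational stationarity conditions, which is a genuine service. Two small points of disagreement with your narrative. First, in Step~1 you attribute the decoupling partly to the continuity estimates and the zero-probability boundary condition ``from the proof of Proposition~1''; in the paper that zero-probability condition is established by Assumption~\ref{ass:3} (not in Proposition~1), and the decoupling is not a ``first-order'' statement at all --- within $Q_i$ the socially stable identity makes the cell assignment of every social signal, in both directions of each edge, exactly invariant, so the dependence of the law of $\hat X^{(sp)}_i$ on $q_i$ vanishes identically, not merely to first order. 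Second, your Step~3 establishes that the unique local minimizer satisfies both the centroid and the nearest-neighbour conditions, and that the pair has a unique solution by~\cite{Trushkin1982}; from this you conclude ``local minimum precisely when it satisfies the centroid condition.'' That last step quietly needs ``centroid $\Rightarrow$ nearest-neighbour'' inside $Q_i$, which you do not actually argue (a regular quantizer can satisfy centroid with suboptimal boundaries). To be fair, the paper's own proof has the same gap --- its argument gives ``local minimum $\Rightarrow$ centroid'' plus uniqueness of the local minimum, but never closes ``centroid $\Rightarrow$ local minimum.'' If you want to close it, you should argue that under the log-concavity of $\hat p^{(sp)}_i$ and the restriction to the tube $Q_i$ around $\mathbf q^{\circ}$, any centroid quantizer in $Q_i$ must in fact also be a nearest-neighbour quantizer, so that centroid alone pins down the unique stationary point; absent such an argument the ``if'' direction as you have written it is not fully justified, though the ``only if'' direction --- which is what Theorem~\ref{thm:cent} actually uses --- is fine in both your proof and the paper's.
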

\begin{IEEEproof}
Assumptions~\ref{ass:1} and \ref{ass:2} imply that there is a unique local minimum of the loss function of agent $i$. Assumption~\ref{ass:3} also implies that all quantizers in $Q_i$ are socially stable, see \eqref{eq:socially_stable}. Then since the change in the error is due to the distortion of the agent's own signals, the local centroid condition is satisfied.
\end{IEEEproof}
The immediate consequence of the lemma is that the the centroid conditions are also satisfied in the equilibrium.
\begin{theorem}
\label{thm:cent}
In any Nash equilibrium, the quantizers of all agents satisfy the centroid condition with respect to the distribution 
of the signals in the true environment.
\end{theorem}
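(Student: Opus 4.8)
The plan is to derive the theorem directly from Lemma~\ref{lem:cent} by recognizing that any Nash equilibrium strategy is, in particular, a local minimizer of each agent's loss, and that uniqueness of the local minimum pins it down. First I would fix a Nash equilibrium $\mathbf{q}^*$, so that by definition $q^*_i = \arg\min_{q_i} \Delta^{(sp)}_i(\mathbf{q}^*_{-i}, q_i)$ for every $i$; thus $q^*_i$ is a global minimizer of agent $i$'s loss given the others' strategies, hence a fortiori a local minimizer.

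The step that needs care is that Lemma~\ref{lem:cent} is stated only for quantizers in the socially stable set $Q_i$ with opponent profiles in $\mathbf{Q}_{-i}$, so I must first check the equilibrium lands in that set. By Assumption~\ref{ass:3}, any quantizer $q_i$ with $\max_k |y^{\circ k}_i - y^k_i| \geq \epsilon/2$ is strictly rationally dominated, and a strictly dominated strategy can never be a best response, hence never part of a Nash equilibrium. Consequently $q^*_i \in Q_i$ for all $i$, so $\mathbf{q}^*_{-i} \in \mathbf{Q}_{-i}$, and the hypotheses of Lemma~\ref{lem:cent} are met at $\mathbf{q}^*$.

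Next I would invoke uniqueness of the local minimum. Assumptions~\ref{ass:1} and \ref{ass:2} (as used in the existence proof via \cite{Trushkin1982}) guarantee that, for fixed $\mathbf{q}^*_{-i}$, agent $i$'s loss has a unique local minimum; since the global minimizer $q^*_i$ is in particular a local minimizer, it \emph{is} that unique local minimum. Applying Lemma~\ref{lem:cent} to it then yields $\mathbb{E}\left[X^{(sp)}_i \mid q^*_i(\hat{X}^{(sp)}_i) = y^k_i\right] = y^k_i$ for every word $y^k_i$ and every agent $i$, which is precisely the centroid condition with respect to the true-environment distribution $p^{(sp)}_i$. Since $i$ was arbitrary, the theorem follows.

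I expect the only genuine subtlety — rather than a real obstacle — to be this passage from ``equilibrium'' to ``Lemma~\ref{lem:cent} applies'': one must be sure equilibrium quantizers cannot sit outside the well-behaved set $Q_i$ (handled by the domination argument above), and that the centroid characterization is not vacuous, i.e.\ that each equilibrium partition cell has positive probability under $\hat{p}^{(sp)}_i$ so the conditional expectation is well defined — which follows from regularity of the quantizers (cells are nondegenerate intervals) together with the zero-probability-boundary guarantee of Assumption~\ref{ass:3}. Everything else is immediate from the definitions.
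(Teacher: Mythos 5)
Your proof is correct and follows essentially the same route as the paper's: invoke Lemma~\ref{lem:cent} on the fact that Nash equilibrium strategies are best responses (hence local minimizers). You add a few justifications the paper leaves implicit — that the domination argument from Assumption~\ref{ass:3} keeps equilibria inside $Q_i$, and that the conditional expectations are well defined — but these strengthen rather than change the argument.
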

\begin{IEEEproof}
From Lemma~\ref{lem:cent}, all local minima of the loss functions and hence the best responses will satisfy the centroid condition. Therefore the Nash equilibrium satisfies it too because each agent's strategy in the Nash equilibrium is the best response to the strategies of all other agents.
\end{IEEEproof}

Using Assumption \ref{ass:3}, we now show that the centroid conditions are also satisfied with respect to the observed signals in the equilibrium.
\begin{corollary}
In any Nash equilibrium, the quantizers of all agents satisfy the centroid condition with respect to the distribution of 
the signals in the observed environment.
\end{corollary}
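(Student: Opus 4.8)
The plan is to deduce the observed‑environment centroid condition from the true‑environment one already established in Theorem~\ref{thm:cent}. Writing the claim as $\mathbb{E}[\hat{X}^{(sp)}_i \mid q_i(\hat{X}^{(sp)}_i) = y^k_i] = y^k_i$ and recalling that Theorem~\ref{thm:cent} gives $\mathbb{E}[X^{(sp)}_i \mid q_i(\hat{X}^{(sp)}_i) = y^k_i] = y^k_i$, it suffices to show that the communication loss has zero conditional mean on every partition cell:
\[
\mathbb{E}\bigl[\, \hat{X}^{(sp)}_i - X^{(sp)}_i \,\big|\, q_i(\hat{X}^{(sp)}_i) = y^k_i \,\bigr] = 0 .
\]
(Each conditioning event has positive probability, since the observed densities $\hat{p}^{(sp)}_i$ have full support on $(0,1)$ and every regular cell is a nondegenerate interval.)

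First I would condition on the provenance of the signal that agent $i$ receives in a given round. Either it came from $i$'s own physical environment, in which case $\hat{X}^{(sp)}_i = X^{(p)}_i = X^{(sp)}_i$ and the communication loss is identically zero; or it arrived through a final hop from some neighbour $j$ (so $z^j_i = 1$), who transmitted her word $y^l_j = q_j(\hat{X}^{(sp)}_j)$, so that $\hat{X}^{(sp)}_i = y^l_j + \eta$ while $X^{(sp)}_i = X^{(sp)}_j$. In the social case the key point is Assumption~\ref{ass:3}: because the admissible equilibrium strategies are socially stable, see \eqref{eq:socially_stable}, and the hop noise is smaller than $\epsilon/2$ almost surely, the cell into which $i$ places the received signal, namely $q_i(y^l_j + \eta)$, equals the fixed value $q^{\circ}_i(y^{\circ l}_j)$, independent of $\eta$ and of the precise equilibrium quantizers. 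Hence, conditioned on the last hop being this particular $(j,l)$, the event $\{q_i(\hat{X}^{(sp)}_i) = y^k_i\}$ is either certain or impossible, so conditioning further on it changes nothing.

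It then remains to evaluate, for each contributing $(j,l)$, the quantity $\mathbb{E}[\,(y^l_j + \eta) - X^{(sp)}_j \mid z^j_i = 1,\ q_j(\hat{X}^{(sp)}_j) = y^l_j\,]$. Since the source selectors $\mathbf{z}$ and the hop noise $\eta$ are generated independently of agent $j$'s signal process, this equals $y^l_j + \mathbb{E}[\eta] - \mathbb{E}[X^{(sp)}_j \mid q_j(\hat{X}^{(sp)}_j) = y^l_j]$, which is $y^l_j + 0 - y^l_j = 0$, using that $\eta$ is zero‑mean and that Theorem~\ref{thm:cent} applied to agent $j$ gives $\mathbb{E}[X^{(sp)}_j \mid q_j(\hat{X}^{(sp)}_j) = y^l_j] = y^l_j$. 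Summing these contributions over the (countably many) provenance possibilities by the law of total expectation — legitimate because $\mathcal{X} = (0,1)$ makes every term bounded — yields the displayed identity, and adding it to Theorem~\ref{thm:cent} proves the corollary.

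The step I expect to be the main obstacle is handling the recursive nature of the communication loss cleanly: the loss accumulates along the entire path from the originating agent, and one must be sure that ``closing'' the recursion at the last hop via the already‑proved centroid condition for agent $j$ really absorbs all of the upstream distortion. Social stability is exactly the lever that decouples the downstream cell assignment from both the upstream randomness and the fresh hop noise; the delicate bookkeeping is verifying that $\{z^j_i = 1\}$ and $\eta$ carry no information about the events $\{q_j(\hat{X}^{(sp)}_j) = y^l_j\}$ or about $X^{(sp)}_j$, after which everything else is routine.
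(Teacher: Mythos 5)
Your argument is correct and matches the paper's own proof in its essential structure: you decompose by the provenance of the signal (own physical environment vs.\ last hop from neighbour $j$ with word $y^l_j$), invoke the social stability from Assumption~\ref{ass:3} together with the bounded noise support to make the cell assignment $q_i(y^l_j+\eta)$ deterministic (so conditioning on $\{q_i(\hat{X}^{(sp)}_i)=y^k_i\}$ only selects which $(j,l)$ contribute), use the zero-mean property of $\eta$ and Theorem~\ref{thm:cent} applied to the transmitting agent $j$ to kill each social contribution, and then close with Theorem~\ref{thm:cent} applied to agent $i$. The only difference is a purely algebraic rearrangement: you show $\mathbb{E}[\hat{X}^{(sp)}_i - X^{(sp)}_i \mid q_i(\hat{X}^{(sp)}_i)=y^k_i]=0$, whereas the paper computes $\mathbb{E}[\hat{X}^{(sp)}_i \mid q_i(\hat{X}^{(sp)}_i)=y^k_i]$ term by term and identifies it with $\mathbb{E}[X^{(sp)}_i \mid q_i(\hat{X}^{(sp)}_i)=y^k_i]$; these are the same calculation. (One trivial nit: the provenance decomposition is over finitely many alternatives, $N-1$ neighbours times $M$ words plus the physical source, not ``countably many.'')
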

\begin{IEEEproof}
Pick any Nash equilibrium $\mathbf{q}$. Following Theorem~\ref{thm:cent}, $\mathbb{E}\left[X^{(sp)}_j | q_j(\hat{X}^{(sp)}_j) = y^k_j\right] = y^k_j$, for any agent $j$. Therefore,
\begin{align*}
& \mathbb{E}\left[\hat{X}^{(sp)}_i | q_i(\hat{X}^{(sp)}_i) = y^k_i\right]\\
& = P_{ii} \mathbb{E}\left[X^{(p)}_i | q_i(X^{(p)}_i) = y^k_i\right] + \sum_{j \neq i, l \in \{1,\dots,M\}}{P_{ij}\Pr\left(q_j\left(\hat{X}^{\left(sp\right)}_j\right) = y^l_j\right)\mathbb{E}\left[y^l_j + \eta | q_i(y^l_j + \eta) = y^k_i\right]}\\
& = P_{ii} \mathbb{E}\left[X^{(p)}_i | q_i(X^{(p)}_i) = y^k_i\right] + \sum_{j \neq i, l \in \{1,\dots,M\}}{P_{ij}\Pr\left(q_j\left(\hat{X}^{\left(sp\right)}_j\right) = y^l_j, q_i\left(y^l_j\right) = y^k_i\right)y^l_j}\\
& \mbox{(using Assumption \ref{ass:3} and because } \eta \mbox{ is zero mean)}\\
& = P_{ii} \mathbb{E}\left[X^{(p)}_i | q_i(X^{(p)}_i) = y^k_i\right]\\
&\qquad + \sum_{j \neq i, l \in \{1,\dots,M\}}{P_{ij}\Pr\left(q_j\left(\hat{X}^{\left(sp\right)}_j\right) = y^l_j, q_i\left(y^l_j\right) = y^k_i\right)\mathbb{E}\left[X^{\left(sp\right)}_j | q_j\left(\hat{X}^{\left(sp\right)}_j\right) = y^l_j\right]}\\
& \mbox{(following Theorem \ref{thm:cent})}\\
& = P_{ii} \mathbb{E}\left[X^{(p)}_i | q_i(X^{(p)}_i) = y^k_i\right] + \sum_{j \neq i}{P_{ij}\mathbb{E}\left[X^{\left(sp\right)}_j | q_i\left(q_j\left(\hat{X}^{\left(sp\right)}_j\right) + \eta\right) = y^k_i\right]}\\
& = \mathbb{E}\left[X^{(sp)}_i | q_i(\hat{X}^{(sp)}_i) = y^k_i\right] = y^k_i \mbox{.}
\end{align*}
This completes the proof.
\end{IEEEproof}
This result implies designing for the objective $\Delta^{(sp)}_i(\mathbf{q}_{-i}, q_i)$ and for the objective 
$\hat{\Delta}^{(sp)}_i(\mathbf{q}_{-i}, q_i)$ could be equivalent, thereby paralleling traditional
results in remote source coding where there is separation between estimation and quantization \cite{WolfZ1970}. We now show that
the two objectives are indeed equivalent in cycle-free networks.

\begin{theorem}
\label{thm:NE-eqvl}
When there are no cycles in the communication network, then a strategy profile is a Nash equilibrium if and only if the quantizers minimize the expected distortion in the representation of the observed signal for each agent given the quantizers for all other agents.
\end{theorem}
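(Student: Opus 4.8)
The plan is to establish that, once attention is confined to the socially stable strategy sets $Q_i$ of Assumption~\ref{ass:3}, acyclicity forces $\Delta^{(sp)}_i(\mathbf q_{-i},q_i)$ and $\hat\Delta^{(sp)}_i(\mathbf q_{-i},q_i)$ to differ, as functions of $q_i$, only by a term that does not involve $q_i$. Granting this, $\arg\min_{q_i}\Delta^{(sp)}_i(\mathbf q_{-i},\cdot)=\arg\min_{q_i}\hat\Delta^{(sp)}_i(\mathbf q_{-i},\cdot)$ for every $i$ and every $\mathbf q_{-i}\in\mathbf Q_{-i}$, so the best-response condition for the objectives $\Delta^{(sp)}_i$---which by definition is the Nash condition---holds exactly when the best-response condition for the objectives $\hat\Delta^{(sp)}_i$ holds, and both directions of the ``if and only if'' come out at once. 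I would state at the outset that, since Assumption~\ref{ass:3} makes every strategy outside $Q_i$ strictly dominated, ``Nash equilibrium'' and ``$q_i$ minimizes $\hat\Delta^{(sp)}_i$'' are both read as optimality over $Q_i$, which is exactly the set on which the identity below is valid.

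The starting point would be the three-way split of Section~\ref{sec:problem}: writing $X^{(sp)}_i-q_i(\hat X^{(sp)}_i)=\bigl(X^{(sp)}_i-\hat X^{(sp)}_i\bigr)+\bigl(\hat X^{(sp)}_i-q_i(\hat X^{(sp)}_i)\bigr)$ and squaring,
\[
\Delta^{(sp)}_i(\mathbf q_{-i},q_i)=\tilde\Delta^{(sp)}_i(\mathbf q_{-i},q_i)+\hat\Delta^{(sp)}_i(\mathbf q_{-i},q_i)+2\Gamma_i(\mathbf q_{-i},q_i),
\]
with correlation term $\Gamma_i=\mathbb E\bigl[(X^{(sp)}_i-\hat X^{(sp)}_i)(\hat X^{(sp)}_i-q_i(\hat X^{(sp)}_i))\bigr]$. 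The first structural input is that an acyclic network has $P^n_i=0$ for all $n\ge 1$ (in the notation of the proof of Proposition~1), so every signal that reaches $i$ travels along a path that has not previously visited $i$; hence the entire joint law of the signal's trajectory---its physical origin, its path, the intermediate words and channel noises, and therefore $X^{(sp)}_i$ and $\hat X^{(sp)}_i$---is a function of $\mathbf P$ and $\{q_m:m\ne i\}$ alone. In particular $\tilde\Delta^{(sp)}_i$ does not depend on $q_i$. This is precisely the step that fails in the presence of cycles, where $\hat X^{(sp)}_i$ genuinely depends on $q_i$, which is why the hypothesis is needed.

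The key step would then be to show that $\hat\Delta^{(sp)}_i$ and $\Gamma_i$ depend on $q_i$ only through $q_i(\hat X^{(sp)}_i)$, which on every social-source event is pinned, independently of $q_i$, by social stability. Conditioning on the source selector $\mathbf z_i$: on $\{z^i_i=1\}$ one has $\hat X^{(sp)}_i=X^{(sp)}_i=X^{(p)}_i$, so the factor $X^{(sp)}_i-\hat X^{(sp)}_i$ vanishes in $\Gamma_i$ while the contribution to $\hat\Delta^{(sp)}_i$ is $P_{ii}\mathbb E[(X^{(p)}_i-q_i(X^{(p)}_i))^2]$; on the event that the signal reaches $i$ from a neighbor $j$ that emitted the word $y^l_j$, social stability---equation \eqref{eq:socially_stable} together with $\eta<\epsilon/2$ a.s.\ from Assumption~\ref{ass:3}---forces $q_i(\hat X^{(sp)}_i)=q_i(y^l_j+\eta)=q^{\circ}_i(y^{\circ l}_j)$, the same value for every $q_i\in Q_i$, so the integrands of both $\hat\Delta^{(sp)}_i$ and $\Gamma_i$ are $q_i$-free on this event (using also $\hat X^{(sp)}_i=y^l_j+\eta$, $X^{(sp)}_i=X^{(sp)}_j$, and $\mathbb E[\eta]=0$; a one-line calculation even evaluates $\Gamma_i$ explicitly, but only its independence of $q_i$ is needed). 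Consequently both $\Delta^{(sp)}_i$ and $\hat\Delta^{(sp)}_i$ equal $P_{ii}\mathbb E[(X^{(p)}_i-q_i(X^{(p)}_i))^2]$ plus a remainder free of $q_i$, so they differ by a $q_i$-free quantity on $Q_i$, and the theorem follows as in the first paragraph. (That the only surviving $q_i$-dependence is the physical-source term is also what makes the subsequent root-to-leaves optimization possible.)

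The step I expect to be the main obstacle is the bookkeeping supporting the previous paragraph: checking carefully that acyclicity really does strip $q_i$ out of the trajectory law---including the word-emission probabilities along the paths that feed into $i$---and that the ``zero probability boundary'' content of Assumption~\ref{ass:3} is strong enough that the additive noise never moves a received word of $j$ across a partition boundary of any $q_i\in Q_i$, so that $q_i(y^l_j+\eta)$ is genuinely constant in the noise realisation. Everything else is routine once these two facts are established.
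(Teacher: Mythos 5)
There is a genuine gap, and it sits exactly in the step you flagged as ``routine.'' You read the social-stability relation \eqref{eq:socially_stable}, $q_i(y^l_j)=q^{\circ}_i(y^{\circ l}_j)$, as an equality of real numbers, and concluded that $q_i(y^l_j+\eta)=q^{\circ}_i(y^{\circ l}_j)$ is ``the same value for every $q_i\in Q_i$.'' That cannot be the intended meaning: $q_i(y^l_j)\in\mathcal Y_i$ while $q^{\circ}_i(y^{\circ l}_j)\in\mathcal Y^{\circ}_i$, and these two vocabularies are generically disjoint. The relation is an index identity: if $q^{\circ}_i(y^{\circ l}_j)=y^{\circ k}_i$ then $q_i(y^l_j)=y^{k}_i$ with the \emph{same} $k$ --- this is exactly how it is used later in the Corollary's proof, where $q_i(y^l_j)=y^k_i$ appears as a deterministic event selecting a word index. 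Under the correct reading, the social-source contribution to $\hat\Delta^{(sp)}_i$ is $\sum_{j\neq i,\,l}P_{ij}p^l_j\,\mathbb E\bigl[(y^l_j+\eta-y^{k(j,l)}_i)^2\bigr]$, which very much depends on $q_i$ through the word locations $y^{k}_i$; likewise the correlation term $\Gamma_i$ contains $\sum_{j,l}P_{ij}p^l_j\bigl(\mathbb E[X^{(sp)}_j\mid y^l_j]-y^l_j\bigr)(y^l_j-y^{k(j,l)}_i)$, which is $q_i$-dependent unless the \emph{neighbors'} centroid conditions already hold. So the identity ``$\Delta^{(sp)}_i-\hat\Delta^{(sp)}_i$ is $q_i$-free on $Q_i$'' is false in general, and your one-shot argument for both directions of the iff does not go through. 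A quick sanity check confirms the break: if your claim held, only the $P_{ii}\mathbb E[(X^{(p)}_i-q_i(X^{(p)}_i))^2]$ term would carry $q_i$-dependence, the unique best response would be $q^{\circ}_i$ for every $\mathbf q_{-i}$, the Nash equilibrium would be $\mathbf q^{\circ}$, and the game would be strategically vacuous --- contradicting Theorem~\ref{thm:cent} (which forces the centroid condition with respect to $X^{(sp)}_i$, not $X^{(p)}_i$) and the numerical shift reported in Table~\ref{tab:shift}.

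What the paper actually does is weaker and decoupled across directions precisely because of this. For the forward direction it expands $\Delta^{(sp)}_i$ and shows that the \emph{cross} term between $X^{(sp)}_j-q_j(\hat X^{(sp)}_j)$ and $q_j(\hat X^{(sp)}_j)-q_i(q_j(\hat X^{(sp)}_j))$ vanishes at a Nash equilibrium because Theorem~\ref{thm:cent} gives $\mathbb E[X^{(sp)}_j-q_j(\hat X^{(sp)}_j)]=0$, leaving $\Delta^{(sp)}_i=\hat\Delta^{(sp)}_i+(\text{$q_i$-free})$ only at equilibrium. For the reverse direction it cannot invoke that vanishing directly, so it topologically sorts the DAG and inducts upstream-to-downstream: at the sources $\Delta^{(sp)}_i=\hat\Delta^{(sp)}_i$ identically, the centroid condition is inherited there, and each subsequent layer then has a vanishing cross term because all upstream agents already satisfy the centroid condition. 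If you want to repair your write-up, you should replace the ``value-equality'' claim by the index-equality, keep the $q_i$-dependence of the social terms explicit, and recover the equivalence of argmins via the centroid condition plus the DAG induction --- which is the paper's route.
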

\begin{IEEEproof}
When there are no cycles in the communication network, then the observed environments of all agents that send messages to any agent $i$ are independent of $i$'s quantizer. The communication frequency matrix imposes a directed acyclic network where the direction of the edges are from transmitters to receivers. In this case,

\begin{align*}
\Delta^{(sp)}_i(\mathbf{q}_{-i}, q_i) &= P_{ii} \mathbb{E}\left[\left(X^{\left(p\right)}_i - q_i\left(X^{\left(p\right)}_i\right)\right)^2\right] + \sum_{j \neq i} P_{ij} \mathbb{E}\left[\left(X^{\left(sp\right)}_j - q_i\left(q_j\left(\hat{X}^{\left(sp\right)}_j\right)\right)\right)^2\right]\\
& = P_{ii} \mathbb{E}\left[\left(X^{\left(p\right)}_i - q_i\left(X^{\left(p\right)}_i\right)\right)^2\right]\\
&\quad + \sum_{j \neq i} P_{ij} \left(\mathbb{E}\left[\left(X^{\left(sp\right)}_j - q_j\left(\hat{X}^{\left(sp\right)}_j\right)\right)^2\right] + \mathbb{E}\left[\left(q_j\left(\hat{X}^{\left(sp\right)}_j\right) - q_i\left(q_j\left(\hat{X}^{\left(sp\right)}_j\right)\right)\right)^2\right]\right)\\
&\quad + 2 \sum_{j \neq i} P_{ij} \left(\mathbb{E}\left[\left(X^{\left(sp\right)}_j - q_j\left(\hat{X}^{\left(sp\right)}_j\right)\right)\right]\mathbb{E}\left[\left(q_j\left(\hat{X}^{\left(sp\right)}_j\right) - q_i\left(q_j\left(\hat{X}^{\left(sp\right)}_j\right)\right)\right)\right]\right)\\
& = \hat{\Delta}^{(sp)}_i\left(\mathbf{q}_{-i}, q_i\right) + \sum_{j \neq i} P_{ij} \left(\mathbb{E}\left[\left(X^{\left(sp\right)}_j - q_j\left(\hat{X}^{\left(sp\right)}_j\right)\right)^2\right]\right)\\
&\quad + 2 \sum_{j \neq i} P_{ij} \left(\left(\mathbb{E}\left[\left(X^{\left(sp\right)}_j - q_j\left(\hat{X}^{\left(sp\right)}_j\right)\right)\right]\mathbb{E}\left[\left(q_j\left(\hat{X}^{\left(sp\right)}_j\right) - q_i\left(q_j\left(\hat{X}^{\left(sp\right)}_j\right)\right)\right)\right]\right)\right)\mbox{.}
\end{align*}

Assume that $\mathbf{q}^*$ is a Nash equilibrium. Then by Theorem~\ref{thm:cent}, 
\[
\mathbb{E}\left[\left(X^{\left(sp\right)}_j - q_j\left(\hat{X}^{\left(sp\right)}_j\right)\right)\right] = 0\mbox{.}
\]
Therefore,
\begin{align*}
& \Delta^{(sp)}_i(\mathbf{q}^*_{-i}, q^*_i) = \hat{\Delta}^{(sp)}_i\left(\mathbf{q}^*_{-i}, q^*_i\right) + \sum_{j \neq i} P_{ij} \left(\mathbb{E}\left[\left(X^{\left(sp\right)}_j - q^*_j\left(\hat{X}^{\left(sp\right)}_j\right)\right)^2\right]\right) \mbox{.}
\end{align*}
Since the second term is independent of agent $i$'s action, $q^*_i$ must minimize $\hat{\Delta}^{(sp)}_i\left(\mathbf{q}^*_{-i}, q^*_i\right)$ because it minimizes $\Delta^{(sp)}_i(\mathbf{q}^*_{-i}, q^*_i)$ for $\mathbf{q}^*_{-i}$.

To prove the other side, assume a strategy profile $\mathbf{q}^*$ is such that $q^*_i = \arg\min_{q_i}{\hat{\Delta}^{(sp)}_i(\mathbf{q}_{-i}, q_i)}$. Without loss of generality, we divide the set of agents in the network as follows. Let the $\{1, \dots, k_0\}$ agents be the ones who only observe their physical environment, i.e.\ $P_{ij} = 0$ for all $i \leq k_0$ and all $j \neq i$. Let $\{k_0+1,\dots,k_1\}$ be the agents who observe social signals only from the agents in $\{1, \dots, k_0\}$ and so on. We say that agents $\{1, \dots, k_0\}$ are upstream from agents $\{k_0+1,\dots,k_1\}$. Then for any agent, $i \in \{1, \dots, k_0\}$, $\Delta^{(sp)}_i(\mathbf{q}^*_{-i}, q^*_i) = \hat{\Delta}^{(sp)}_i\left(\mathbf{q}^*_{-i}, q^*_i\right)$ and $q^*_i$ is the best response of $i$ to $\mathbf{q}^*_{-i}$ and $E\left[\left(X^{\left(sp\right)}_i - q^*_i\left(\hat{X}^{\left(sp\right)}_i\right)\right)\right] = 0$. Now, extending the argument to agents in $\{k_0+1,\dots,k_1\}$, we find that for any agent $i \in \{k_0+1,\dots,k_1\}$, $q^*_i$ will be a best response of $i$ to $q^*_{-i}$ and 
\[
\mathbb{E}\left[\left(X^{\left(sp\right)}_i - q^*_i\left(\hat{X}^{\left(sp\right)}_i\right)\right)\right] = 0 \mbox{.}
\] 
Continuing this way, the same follows for all agents and hence, $\mathbf{q}^*$ is a Nash equilibrium.
\end{IEEEproof}
This result holds when there are no loops in the network and therefore agents do not reflect messages back to the transmitting agents;
it remains to prove the equivalent result for loopy graphs, where intricacies are reminiscent of analyzing loopy belief propagation \cite{IhlerFW2005}.  (See also numerical results in Sec.~\ref{sec:numerical}.)

\subsection{Myopic Dynamics: Distributed Lloyd-Max Algorithm}
Having established equivalence between quantizing based on observed signals rather than true signals, let us 
see whether a dynamic process based only on observed signals can lead to equilibrium strategies.
Indeed, sequential application of the Lloyd-Max algorithm (itself iterative) may lead to Nash equilibrium
for the original game.

\begin{corollary}
\label{cor:LM}
Assume there are no cycles in the communication network. If all agents cyclically use the Lloyd-Max algorithm to minimize the expected distortion in the representation of the observed signal as a response to the quantizers of other agents, then there exist a non-trivial set of initial conditions for which this myopic dynamics will converge to Nash equilibrium.
\end{corollary}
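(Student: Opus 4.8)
The plan is to exploit the directed-acyclic structure of the communication matrix together with Theorem~\ref{thm:NE-eqvl}, reducing the claim to a level-by-level convergence argument. Following the proof of Theorem~\ref{thm:NE-eqvl}, partition the agents into levels $L_0, L_1, \ldots, L_m$, where $L_0$ consists of the agents that observe only their physical environment and $L_k$ consists of agents all of whose transmitters lie in $L_0 \cup \cdots \cup L_{k-1}$. I would take the admissible initial conditions to be a suitable open subset of $\prod_i Q_i$, the socially stable strategies of Assumption~\ref{ass:3}, so that along the whole trajectory the interaction between quantization loss and communication loss vanishes and each agent's Lloyd--Max update is exactly a step toward minimizing $\hat{\Delta}^{(sp)}_i(\mathbf{q}_{-i}, q_i)$; by Theorem~\ref{thm:NE-eqvl}, a profile in which every agent minimizes $\hat{\Delta}^{(sp)}_i$ given the others is precisely a Nash equilibrium, so it suffices to show the dynamics drives each agent to such a minimizer.

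First I would handle the base case $L_0$. For $i \in L_0$ the observed environment $\hat{X}^{(sp)}_i$ coincides with $X^{(p)}_i$ and is independent of every other agent's strategy; by Assumption~\ref{ass:1} its density is log-concave, so there is a unique regular stationary quantizer $q^{\circ}_i$ and the Lloyd--Max iterations converge to it from any regular initialization \cite{Trushkin1982, GershoG1992}. Hence, as the cyclic dynamics proceeds, the quantizers of the $L_0$ agents converge to $\mathbf{q}^{\circ}_{L_0}$, which is their unique equilibrium play and which, once reached, is fixed for all subsequent rounds.

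Next I would proceed by induction on $k$. Assuming the quantizers of all agents in $L_0 \cup \cdots \cup L_{k-1}$ have converged, the observed density $\hat{p}^{(sp)}_i$ of an agent $i \in L_k$ is the mixture $P_{ii} p^{(p)}_i + \sum_{j \neq i,\, l} P_{ij} p^l_j f^l_j$ built only from the (now essentially fixed) upstream quantizers; by Assumption~\ref{ass:2} it is log-concave with decreasing semi-elasticity, so $i$ again has a unique best response and its Lloyd--Max iterations converge to it. By the corollary following Theorem~\ref{thm:cent} this best response satisfies the centroid condition with respect to the observed signal, and by Theorem~\ref{thm:NE-eqvl} it coincides with $i$'s equilibrium strategy given the upstream play; hence $L_k$ converges. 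Induction then yields convergence of the whole profile to a Nash equilibrium. In the clean special case where each agent runs the Lloyd--Max algorithm to convergence on its turn, the outer cyclic process is literally sequential best response on the DAG, and the profile reaches exact equilibrium after at most $m+1$ sweeps.

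The main obstacle is the interleaving of the inner, iterative Lloyd--Max loop with the outer cyclic schedule: when a downstream agent updates, the quantizers of its transmitters are only approximately settled, so the inductive step must be made robust to perturbations. The plan there is to invoke continuity of the Lloyd--Max map and of the best response in the input density --- the same continuity and theorem-of-the-maximum arguments (\cite{Berge1963}) used in the existence proof --- together with uniqueness of the best response under Assumptions~\ref{ass:1}--\ref{ass:2}, to show that a small residual error upstream produces a proportionally small displacement of the downstream stationary point, so errors contract down the levels rather than amplify. Guaranteeing that the trajectory never leaves $\prod_i Q_i$ (so Assumption~\ref{ass:3} keeps applying, in particular that all iterates remain regular with words bounded away from $\{0,1\}$) is precisely why only a non-trivial --- but, by Assumption~\ref{ass:3}, nonempty and open --- set of initial conditions near $\mathbf{q}^{\circ}$ is claimed rather than global convergence.
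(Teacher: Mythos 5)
Your proof follows essentially the same route as the paper's: exploit the acyclic topology to decompose the agent set into levels $L_0, L_1, \ldots, L_m$ by their distance from the purely physical sources, run the argument level by level, and invoke Theorem~\ref{thm:NE-eqvl} to convert ``each agent minimizes $\hat{\Delta}^{(sp)}_i$ given the others'' into ``the profile is a Nash equilibrium.'' Two small differences are worth flagging. First, the paper secures global convergence of the inner Lloyd--Max loop via a deterministic annealing initialization, whereas you lean on the log-concavity guaranteed by Assumptions~\ref{ass:1}--\ref{ass:2} (Trushkin's uniqueness result). Under the paper's own standing assumptions your route is in fact the more economical one---annealing is only needed for ``any source distribution,'' which is more generality than the corollary requires. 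Second, you explicitly separate the ``clean'' schedule (each agent runs Lloyd--Max to convergence on its turn, so the outer loop is sequential best response on a DAG and terminates exactly in $m+1$ sweeps) from the genuinely interleaved schedule (one inner Lloyd--Max step per outer sweep), and you sketch a continuity/contraction argument for the latter. The paper's proof silently assumes the clean schedule---note its phrase ``will always converge \ldots in the first iteration''---so your discussion of the interleaved case goes beyond what the paper establishes and is a reasonable reading of the ``non-trivial set of initial conditions'' hedge in the statement. Your identification of that set with a neighborhood of $\mathbf{q}^{\circ}$ inside $\prod_i Q_i$ (so that Assumption~\ref{ass:3} and regularity persist along the trajectory) is also a cleaner articulation than the paper gives of why the claim is local rather than global.
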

\begin{IEEEproof}
Since the network is acyclic, then if the Lloyd-Max algorithm converges to the global minima every time, the myopic updating of individual strategies of agents $\{1,\dots,k_0\}$ as defined in the construction in Theorem~\ref{thm:NE-eqvl} will always converge to a strategy that minimizes expected distortion in the observed signal in the first iteration. Thence agents in $\{k_0+1,\dots,k_1\}$ will converge to strategies that minimize their expected error in their observed signals as a response to the strategies of all upstream agents in the second iteration. Following this, in finite number of iterations, all agents will converge to the strategies that minimize their expected error in their observed signals as a response to the strategies of all upstream agents in the second iteration. Then the result follows from Theorem~\ref{thm:NE-eqvl}.

Applying a deterministic annealing step \cite{RoseGF1992} allows an initialization of the Lloyd-Max algorithm so that it is guaranteed to converge to the global minimum for any source distribution.  (Computationally simpler approaches to initialization are also often effective \cite{Wu1990} but do not provide theoretical guarantees.)
\end{IEEEproof}

Sec.~\ref{sec:numerical} shows numerical examples for these dynamics.

\subsection{Characterizing the Equilibrium Vocabularies}
Omitting detailed derivation, let us describe equilibrium vocabularies.  
Contrary to extant results in the evolution 
of language that point to the convergence of language to a single shared vocabulary \cite{NowakK1999}, we find 
that under different settings several vocabularies may coexist in equilibrium.  The overlap between vocabularies is high for 
individuals who communicate more frequently and have similar natural environments as compared to the agents who communicate 
less frequently and have dissimilar local environments. Balance must be achieved between the error in local representation 
and social communication.

The loss in translation phenomenon that arises in long chains of communication, recall Fig.~\ref{fig:loss}, is often troubling
since this can lead to large distortions. Further, the loss could be different for different chains of communication. As such, equilibria that lead to path-dependent translation loss could be undesirable. We now characterize the equilibria that introduce path-dependent translation loss and manifest the phenomenon of loss in translation chains. We say that a set of $\mathbb{A}$ agents have a \emph{shared vocabulary} if the intersection of individual 
partition regions have exactly one word for each agent in $\mathbb{A}$, i.e., $$\bigcap_{j\in \mathbb{A}}(a_j^{k-1}, a_j^{k}) \neq \emptyset \mbox{, for all } k \in \{1, \dots, M\} \mbox{, and } $$ $$y_i^k \in \bigcap_{j\in \mathbb{A}}(a_j^{k-1}, a_j^{k}) \mbox{, for all } i \in \mathbb{A}, k \in \{1, \dots, M\}.$$  The loss in translation along a chain of communication does not grow if and only if the chain consists of agents with shared vocabulary.

\begin{theorem}
For any set of agents $\mathbb{A}$, there is no path-dependent translation loss and the loss in translation is bounded by the maximum partition size for all communication chains if and only if all agents in $\mathbb{A}$ have a shared vocabulary.
\end{theorem}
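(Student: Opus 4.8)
The plan is to prove the two implications separately; the key new ingredient is that the "translation between two agents" is an order-preserving map of word indices, so its failure to be the identity is exactly the obstruction to shared vocabulary.

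\textbf{The ``if'' direction (shared vocabulary $\Rightarrow$ clean translation).} Suppose the agents in $\mathbb{A}$ share a vocabulary. The crucial observation is \emph{index preservation}: if $i,j\in\mathbb{A}$ and agent $i$ emits $y_i^k$, then by definition $y_i^k\in(a_j^{k-1},a_j^k)=\mathcal{R}_j^k$, and since by Assumption~\ref{ass:3} the additive noise cannot move $y_i^k$ across a partition boundary of $j$, agent $j$ decodes it \emph{deterministically} to $y_j^k$. Composing this along any communication chain $i_1\to i_2\to\cdots\to i_n$ with all $i_m\in\mathbb{A}$ shows the index $k$ selected by $i_1$ (which is fixed by where the original signal falls in $i_1$'s partition) is carried unchanged to the end, so the chain returns $y_{i_n}^k$. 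Two consequences follow. First, $y_{i_n}^k$ depends only on $i_1$, $i_n$ and the original signal, not on the intermediate agents, so there is no path-dependent translation loss. Second, both $y_{i_1}^k$ and $y_{i_n}^k$ lie in $\bigcap_{j\in\mathbb{A}}\mathcal{R}_j^k\subseteq\mathcal{R}_{i_1}^k$, so the translation loss $|y_{i_1}^k-y_{i_n}^k|\le|\mathcal{R}_{i_1}^k|\le\max_{j,l}|\mathcal{R}_j^l|$, a bound independent of the chain length; hence the loss does not grow and is bounded by the maximum partition size.

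\textbf{The ``only if'' direction (contrapositive).} First a reduction: $\mathbb{A}$ shares a vocabulary if and only if $y_i^k\in\mathcal{R}_j^k$ for all $i,j\in\mathbb{A}$ and all $k$, because if the latter holds then each $\bigcap_j\mathcal{R}_j^k$ contains every $y_i^k$ and is in particular nonempty, recovering both clauses of the definition. So if $\mathbb{A}$ does not share a vocabulary there are $i,j\in\mathbb{A}$ and an index $k$ with $y_i^k\in\mathcal{R}_j^{l}$, $l\ne k$. Define the translation map $\tau_{ij}\colon\{1,\dots,M\}\to\{1,\dots,M\}$ by $y_i^k\in\mathcal{R}_j^{\tau_{ij}(k)}$ (well defined since, by Assumption~\ref{ass:3}, words of $i$ are bounded away from the partition boundaries of $j$). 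Because $y_i^1\le\cdots\le y_i^M$ and the cells $\mathcal{R}_j^1,\dots,\mathcal{R}_j^M$ are ordered, $\tau_{ij}$ is non-decreasing, and by hypothesis it is not the identity; being a monotone self-map of a finite chain that is not the identity, it is neither injective nor surjective. Applying this to the round trip $i\to j\to i$: the composite index map is $\tau_{ji}\circ\tau_{ij}$, again non-decreasing, and it is not the identity (if it were, then $y_i^k\mapsto y_i^{\tau_{ij}(k)}$ would already fail to be the identity), so some index $k_0$ with $\tau_{ji}(\tau_{ij}(k_0))\ne k_0$ exists. Hence a signal that agent $i$ quantizes to $y_i^{k_0}$, after the round trip $i\to j\to i$, is re-interpreted by $i$ itself as a different word; comparing this with the interpretation along the trivial (length-$0$) path, or with the interpretations obtained by repeating the cycle, exhibits path dependence. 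Moreover, since $\tau_{ji}\circ\tau_{ij}$ is monotone, iterating the cycle drifts the index strictly in one direction to an eventual fixed point $m^*\ne k_0$; choosing $k_0$ to maximize this drift, the final word $y_i^{m^*}$ lands in a cell disjoint from $\mathcal{R}_i^{k_0}$, so the translation loss leaves the originating cell, and with enough drift it exceeds $\max_{j,l}|\mathcal{R}_j^l|$. Either way the conjunction in the theorem fails, as required.

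\textbf{Main obstacle.} The hard part is the ``only if'' direction, and within it the passage from the abstract fact ``$\tau_{ij}$ is not the identity'' to a genuine counterexample chain. One must (i) make the monotonicity lemma airtight, including the edge case of a word landing exactly on another agent's partition boundary, which Assumption~\ref{ass:3} already rules out; (ii) verify that the constructed cycle is a legitimate communication chain, i.e.\ that the offending pair $i,j$ (or a short loop through $\mathbb{A}$) actually communicates in both directions, so the round trip exists; and (iii) pin down the quantitative claim that by iterating the cycle the loss exceeds the maximum partition size rather than merely leaking into an adjacent cell—this uses that the monotone drift is one-directional and therefore eventually escapes any fixed cell once it moves by at least two positions. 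The noise bookkeeping via Assumption~\ref{ass:3} is routine in both directions and can be cited rather than redone.
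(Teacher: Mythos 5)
Your proof is correct and, on the ``if'' direction, essentially coincides with the paper's argument: both hinge on \emph{index preservation} along the chain (a shared vocabulary plus Assumption~3 forces every intermediate agent to decode to the word with the same index), and both then bound the end-to-end loss by observing that the originating and terminating words lie in a common cell. The paper's bound $|\min_{l\in\mathbb{A}}a_l^k-\max_{l\in\mathbb{A}}a_l^{k-1}|$ (the width of the intersection) is marginally tighter than your $|\mathcal{R}_{i_1}^k|$, but both are at most the maximum partition width, so the conclusion is the same. Where you genuinely diverge is the ``only if'' direction: the paper exhibits the Fig.~1 example (John $\to$ \dots $\to$ Jean vs.\ John $\to$ Jean directly) and simply asserts that ``such examples can be created for any set of agents that do not have a shared vocabulary,'' whereas you construct the counterexample in general by introducing the index-translation maps $\tau_{ij}$, proving they are non-decreasing, and using the fact that a non-decreasing non-identity self-map of a finite chain cannot be injective to force a non-trivial round trip $\tau_{ji}\circ\tau_{ij}\ne\mathrm{id}$. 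That is a cleaner and more self-contained argument than what appears in the paper, and it also gives the monotone-drift picture that explains ``growing'' loss with chain length. On the two caveats you raise: point~(iii) is actually unnecessary, because the theorem's hypothesis is a conjunction, so exhibiting path dependence alone already negates it --- you do not need to show the loss ever exceeds the maximum partition size (and indeed it need not, e.g.\ when the drift is exactly one index); and point~(ii) is a real but shared ambiguity --- both your proof and the paper's implicitly treat ``all communication chains'' as ranging over all sequences of agents in $\mathbb{A}$ rather than only those realized with positive probability under $\mathbf{P}$, which is the only reading under which the ``only if'' direction can hold unconditionally.
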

\begin{proof}
The example in Fig.~\ref{fig:loss} provides a counterexample: a set of agents that do not share a vocabulary and therefore the loss in translation is path-dependent and growing in the length of the communication chain. We note that if John had communicated directly to Jean then Jean would have interpreted \emph{moyen} instead of \emph{grand} and had a smaller translation loss. Such examples can be created for any set of agents that do not have a shared vocabulary, demonstrating that in the absence of shared vocabulary, there is path-dependent translation loss. For the other side, we consider a set of agents $\mathbb{A}$ with shared vocabulary. This implies that for any communication chain between $i, j \in \mathbb{A}$, a signal $x_i$ observed by $i$ that is mapped onto a word $y_i^k$ by agent $i$ is also mapped onto the word $y_l^k$ for any agent $l$ along the communication chain from $i$ to $j$ and is mapped onto the word $y_j^k$ by agent $j$. This implies that loss in translation along the communication chain is $|y_j^k - y_i^k| \leq |\min_{l \in A}{a_l^{k}} - \max_{l \in A}{a_l^{k-1}}|$. By the arbitrary choice of the communication chain, the the loss in translation along all communication chains from $i$ to $j$ is $|y_j^k - y_i^k| \leq |\min_{l \in A}{a_l^{k}} - \max_{l \in A}{a_l^{k-1}}|$. By arbitrary choice of the words, this is true for all words and correspondingly the signals represented by those words.

\end{proof}
\section{Numerical Examples}
\label{sec:numerical}

We had described a cyclic and distributed Lloyd-Max algorithm in Corollary~\ref{cor:LM}.  In this section, we demonstrate numerical experiments for an implementation of the algorithm.  These experiments
include loopy networks, and we see convergence even there.

Consider a set of five agents that have the communication matrix 
\[
\mathbf{P} = \begin{bmatrix}
       4/5 & 1/5 & 0 & 0 & 0 \\
       1/10 & 7/10 & 1/10 & 1/10 & 0 \\       
       0 & 1/10 & 4/5 & 1/10 & 0 \\
       1/10 & 1/10 & 1/10 & 3/5 & 1/10 \\
       1/20 & 0 & 0 & 0 & 19/20  
\end{bmatrix}
\]
whose sparsity pattern, agent labels, and agent physical environments are depicted in Fig.~\ref{fig:example1}.  Clearly the network is loopy rather than tree-structured, and agents communicate
with some peers more than others.  
The physical environments for the agents are all governed by beta distributions
with probability density function:
\[
p_i^{(p)}(x) = \frac{1}{B(\alpha,\beta)} x^{\alpha-1} (1-x)^{\beta - 1}
\]
where $B(\cdot,\cdot)$ is the beta function and each agent having different parameters $(\alpha,\beta)$, 
as given in Fig.~\ref{fig:example1}.  One can directly verify that all five sets of parameters yield log-concave density functions, cf.~\cite{Mu2015}. Notice that some pairs of physical environments are more similar to one another than others.

\begin{figure}
  \centering
  \includegraphics[width=5in]{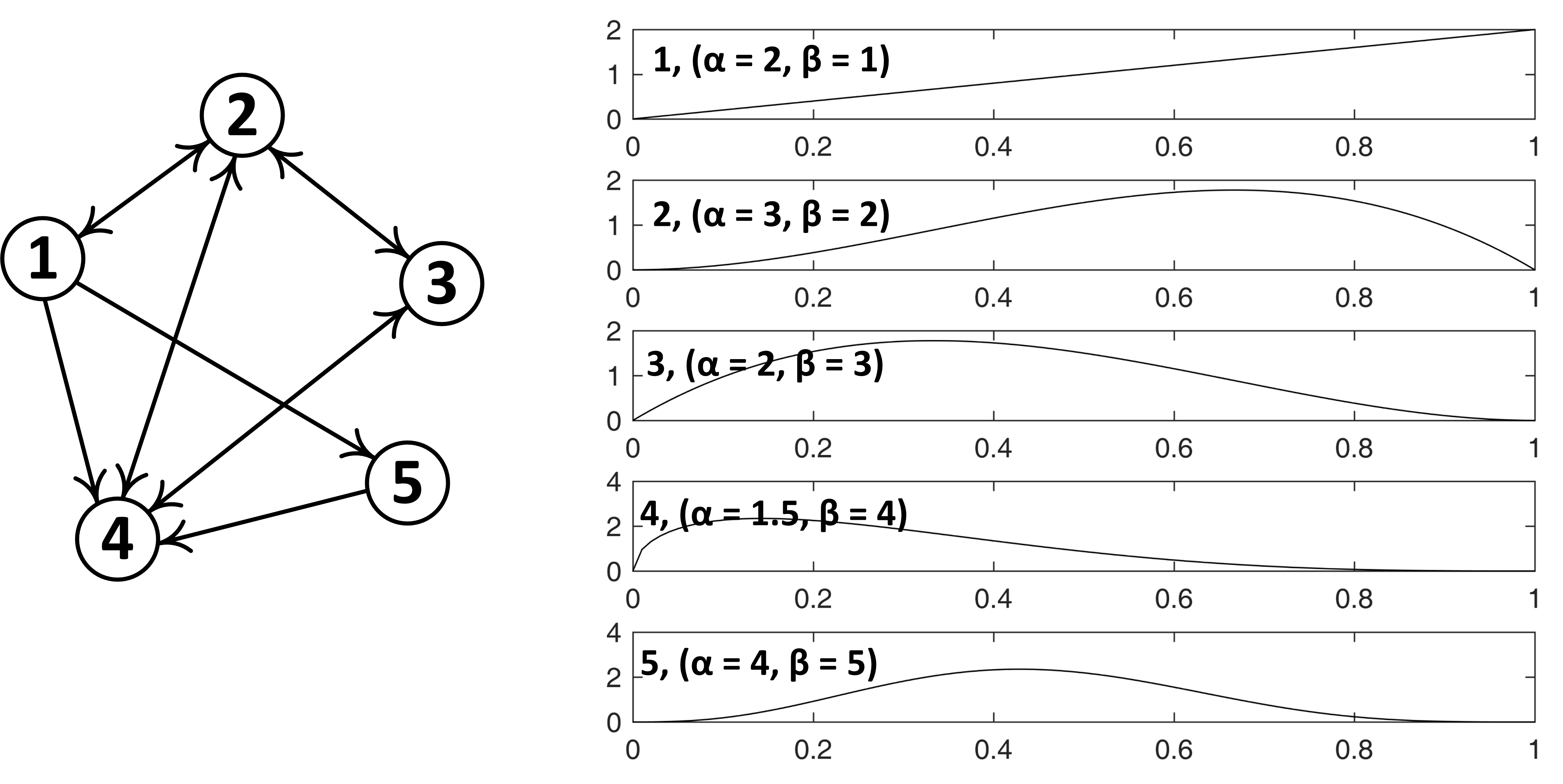}
  \caption{An example network of five agents with communication matrix having sparsity pattern as depicted on the left and local physical environment statistics with beta distribution pdfs with parameters $(\alpha, \beta)$ as depicted on the right.}
  \label{fig:example1}
\end{figure}

We require each agent to use a quantizer with $6$ levels.  The initial quantizer design for each
agent is performed using its respective physical environment $p_i^{(p)}$ alone, with the Lloyd-Max algorithm.  Since these beta distributions are log-concave, the Lloyd-Max algorithm will find the global optimum for any initialization \cite{Trushkin1982}.  In cycling through each $i = 1,\ldots,5$, the next step is to design the agent's quantizer based on the current evaluation of the mixed physical and social density $p_i^{(sp)}$, where the mixing matrix $\mathbf{P}$ is used to combine the physical environment $p_i^{(p)}$ with the current social environment $\hat{p}_j^{(sp)}$ for $j \neq i$ (which is initially just $\hat{p}_j^{(p)}$).  To initialize the Lloyd-Max algorithm, we use the quantizer design from the previous time step, which should already be close to the optimal quantizer.  

This procedure continues cyclically and iteratively with quantizer designs based on new $p_i^{(sp)}$,
which are functions of the fixed $\mathbf{P}$ and $p_i^{(p)}$, but updated $\hat{p}_j^{(sp)}$.  If quantizers do not change too much between a given iteration and the next, we terminate the design and declare convergence.

Fig.~\ref{fig:ex1} shows the quantizers designed in the first four iterations of the algorithm.  For subsequent iterations, the designed quantizers are not easily distinguished visually, since convergence happens quite quickly.  

As one can see, both the physical and the social environments exert an influence on quantizer design, and so the agents' quantizers are more similar to one another in equilibrium than without social considerations.  To see this quantitatively, for each of the $\binom{6}{2}$ pairs of agents we plot the mean-squared distance between the representation points of the quantizers as a function of the Hellinger distance between the physical source distributions.  The Hellinger distance between beta distributions is given by:
\[
1 - \frac{B\left(\tfrac{\alpha_1 + \alpha_2}{2},\tfrac{\beta_1+\beta_2}{2}\right)}{\sqrt{B(\alpha_1,\beta_1)B(\alpha_2,\beta_2)}}\mbox{.}
\]
Fig.~\ref{fig:ex1_r}(a) shows the comparison for quantizers designed in iteration 1, using only the physical environment whereas Fig.~\ref{fig:ex1_r}(b) shows the same comparison in iteration 20 after reaching equilibrium.  One can observe that the similarity of physical environments is highy predictive in the first setting (validating the measure), but much less so in equilibrium.  One can also observe the vertical scale is very different in the two subplots: the quantizers are much more similar in equilibrium.  To specifically see the influence of communication, we can see e.g.\ how agent 1 pulls the quantizer for agent 5 to be finer at larger values in the unit interval, where it has greater probability mass (though it gets pulled to the center by other agents). Notice that agent 1 is the only agent sending messages to agent 5 in the example. The right shift in representation points is shown in Tab.~\ref{tab:shift}.

\begin{table}
  \centering
  \caption{Social Influence on Quantizer}
  \begin{tabular}{|l|l|l|l|l|l|l|} \hline
  Agent 5 & $y_1$ & $y_2$ & $y_3$ & $y_4$ & $y_5$ & $y_6$  \\ \hline
  $X_5^{(p)}$ & $0.1982$ & $0.3243$ & $0.4387$ & $0.5574$ & $0.6902$ & $0.8626$ \\ \hline
  $\hat{X}_5^{(sp)}$ & $0.2002$ & $0.3290$ & $0.4444$ & $0.5617$ & $0.6922$ & $0.8641$ \\ \hline
  right shift & $0.0020$ & $0.0046$ & $0.0058$ & $0.0042$ & $0.0020$ & $0.0015$ \\ \hline
  \end{tabular}
  \label{tab:shift}
\end{table}

\begin{figure}
  \centering
  \includegraphics[width=3.1in]{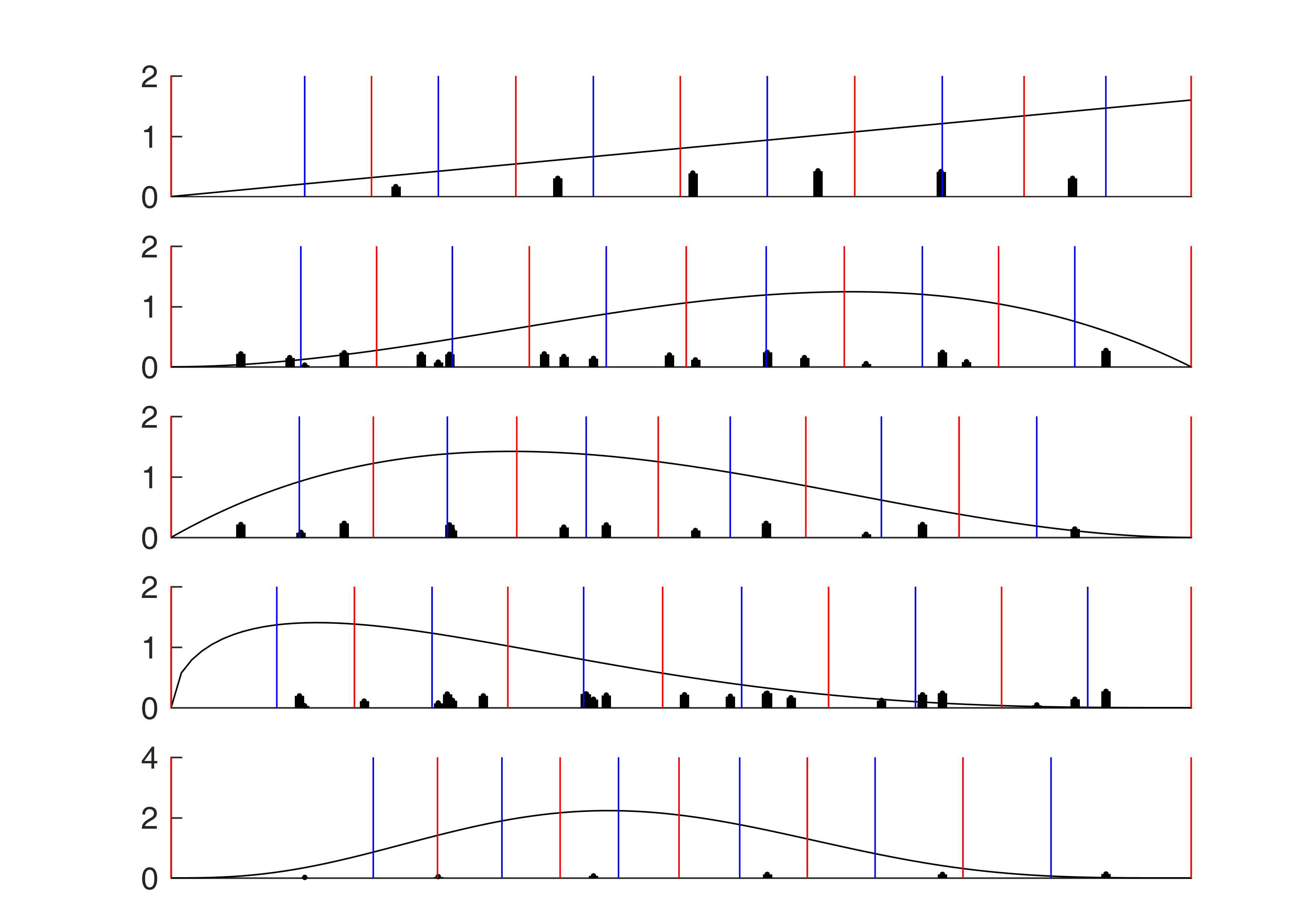} \includegraphics[width=3.1in]{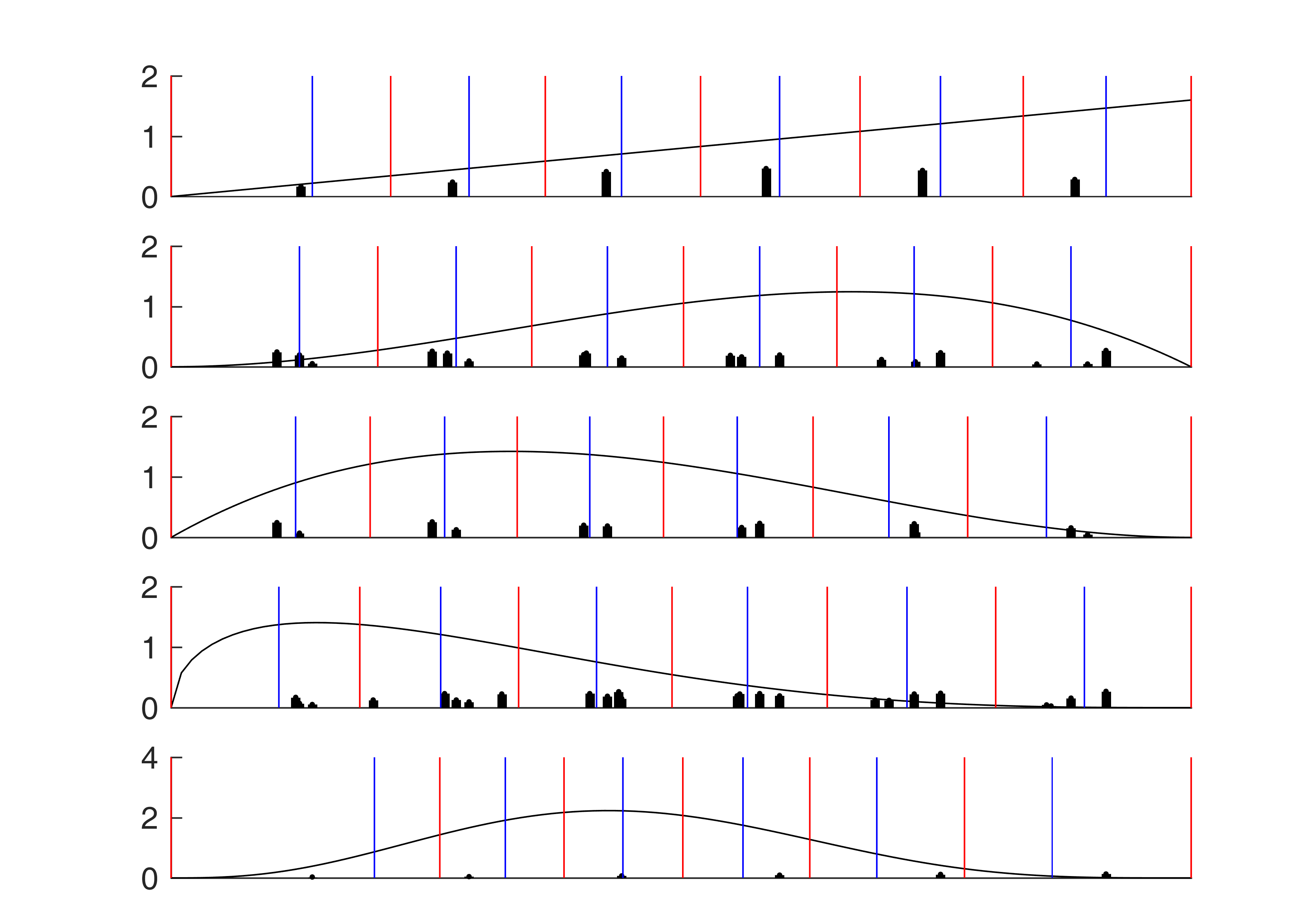}\\ (a)\qquad\qquad\qquad\qquad\qquad\qquad\qquad\qquad\qquad\qquad\quad (b)\\
  \includegraphics[width=3.1in]{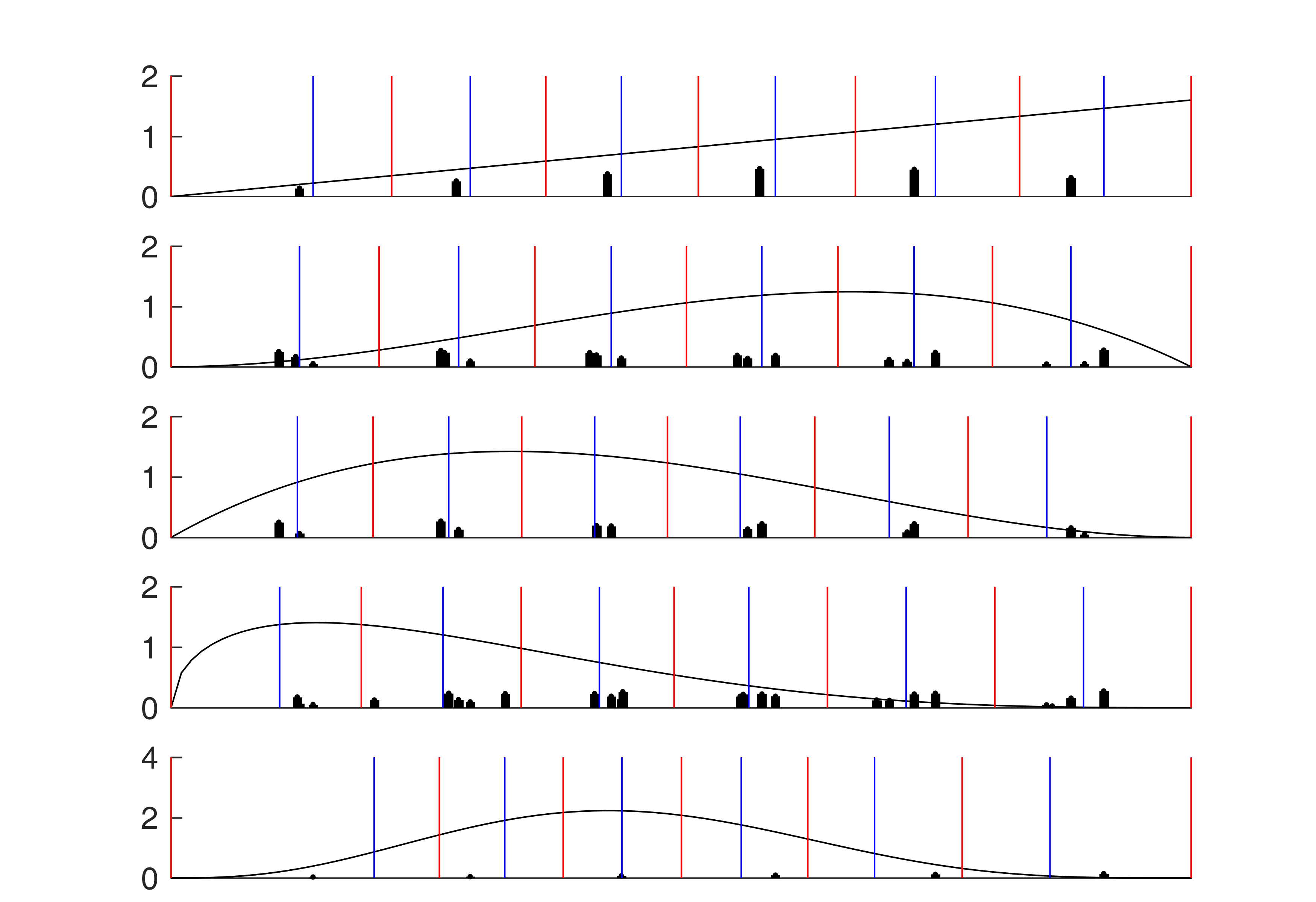} \includegraphics[width=3.1in]{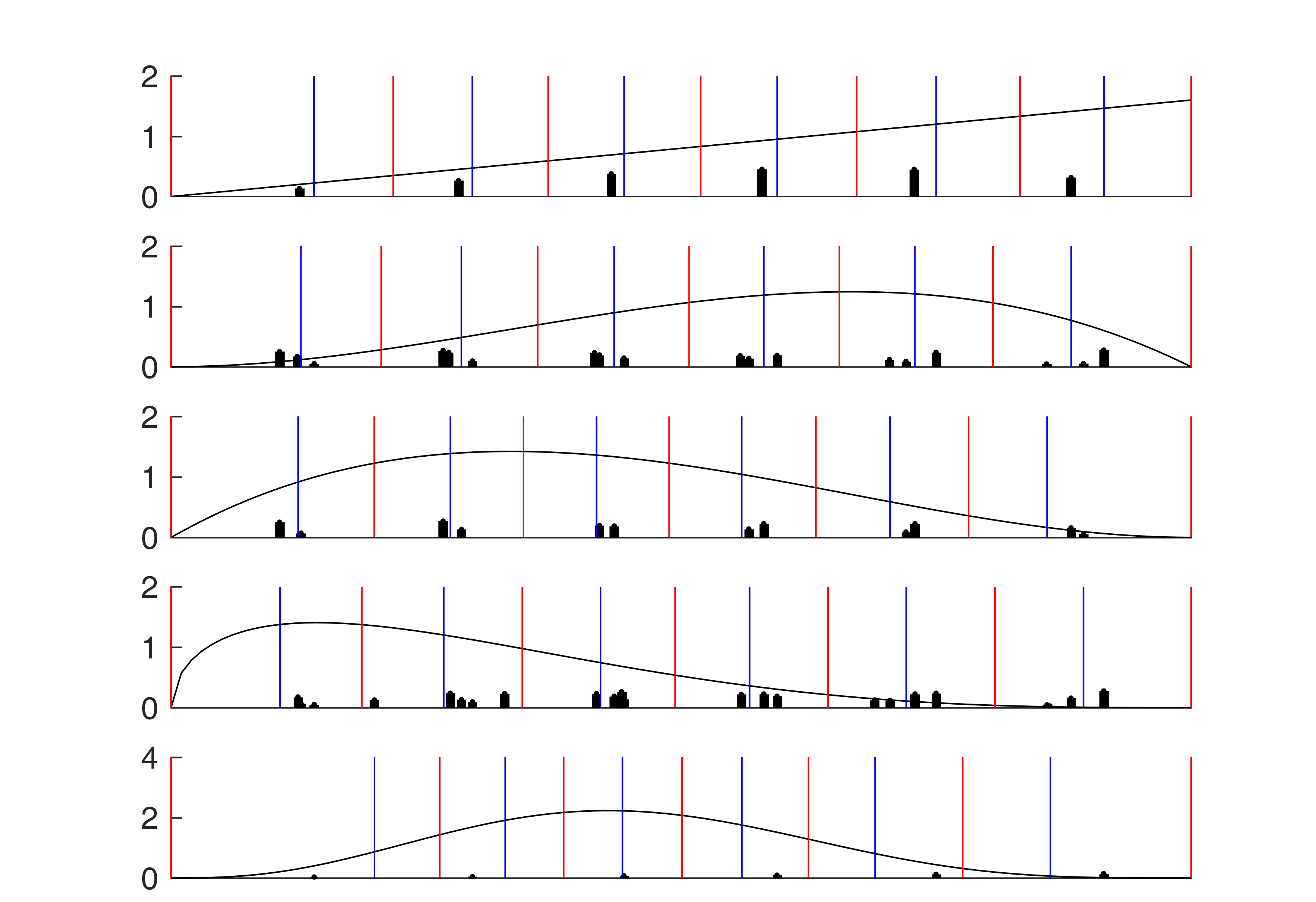}\\ (c)\qquad\qquad\qquad\qquad\qquad\qquad\qquad\qquad\qquad\qquad\quad (d)
  \caption{Iterative quantizer design for the example from Fig.~\ref{fig:example1}.
The black curves indicate the unchanging physical environment statistics, the black vertical stems indicate the changing social context (scaled by 10), the blue lines indicate quantizer representation points, and the red lines indicate quantizer cell boundaries.  Plots (a)--(d) correspond to iterations 1--4, for each of the five agents.}
  \label{fig:ex1}
\end{figure}

\begin{figure}
  \centering
  \includegraphics[width=3.1in]{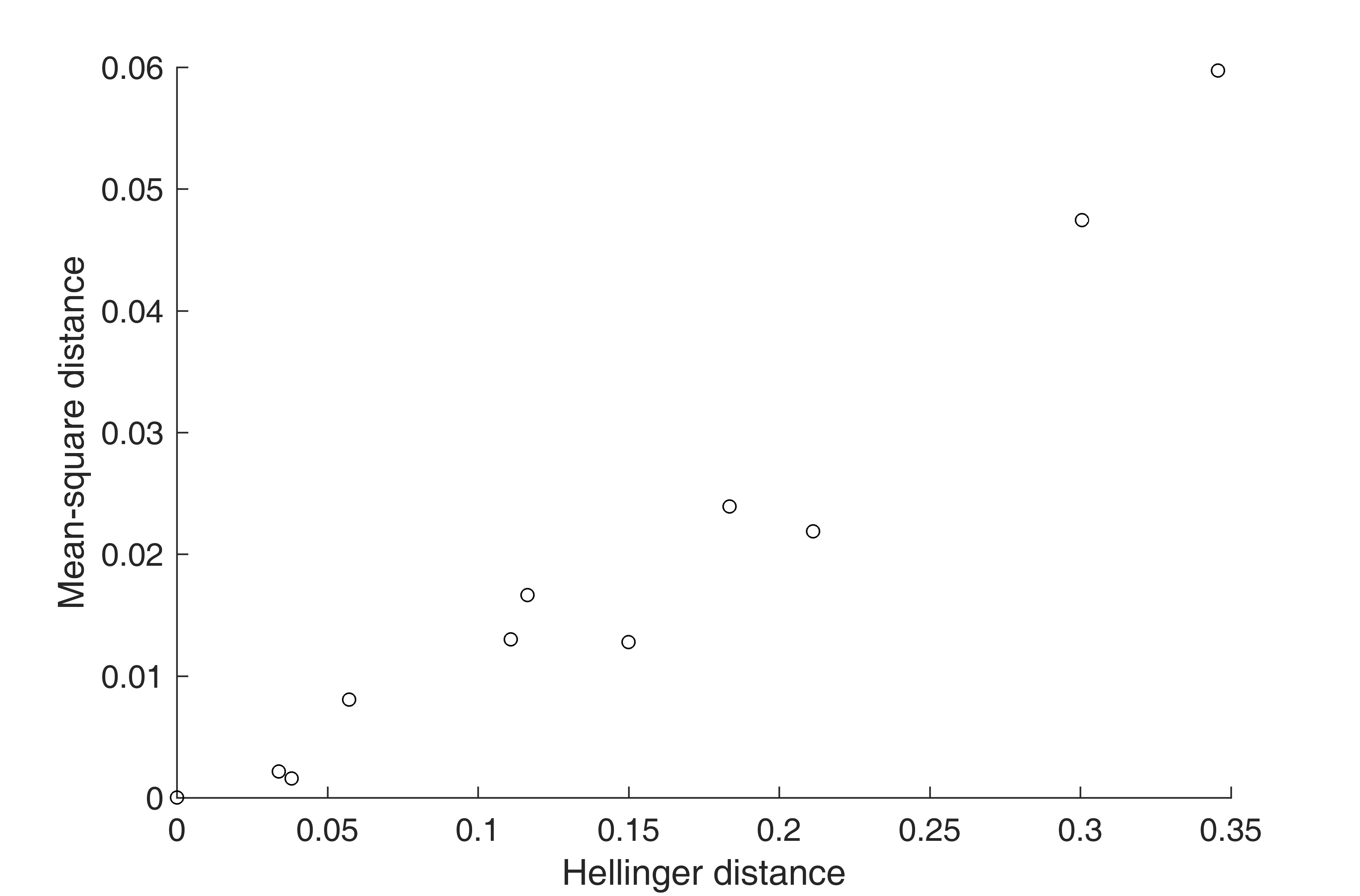} \includegraphics[width=3.1in]{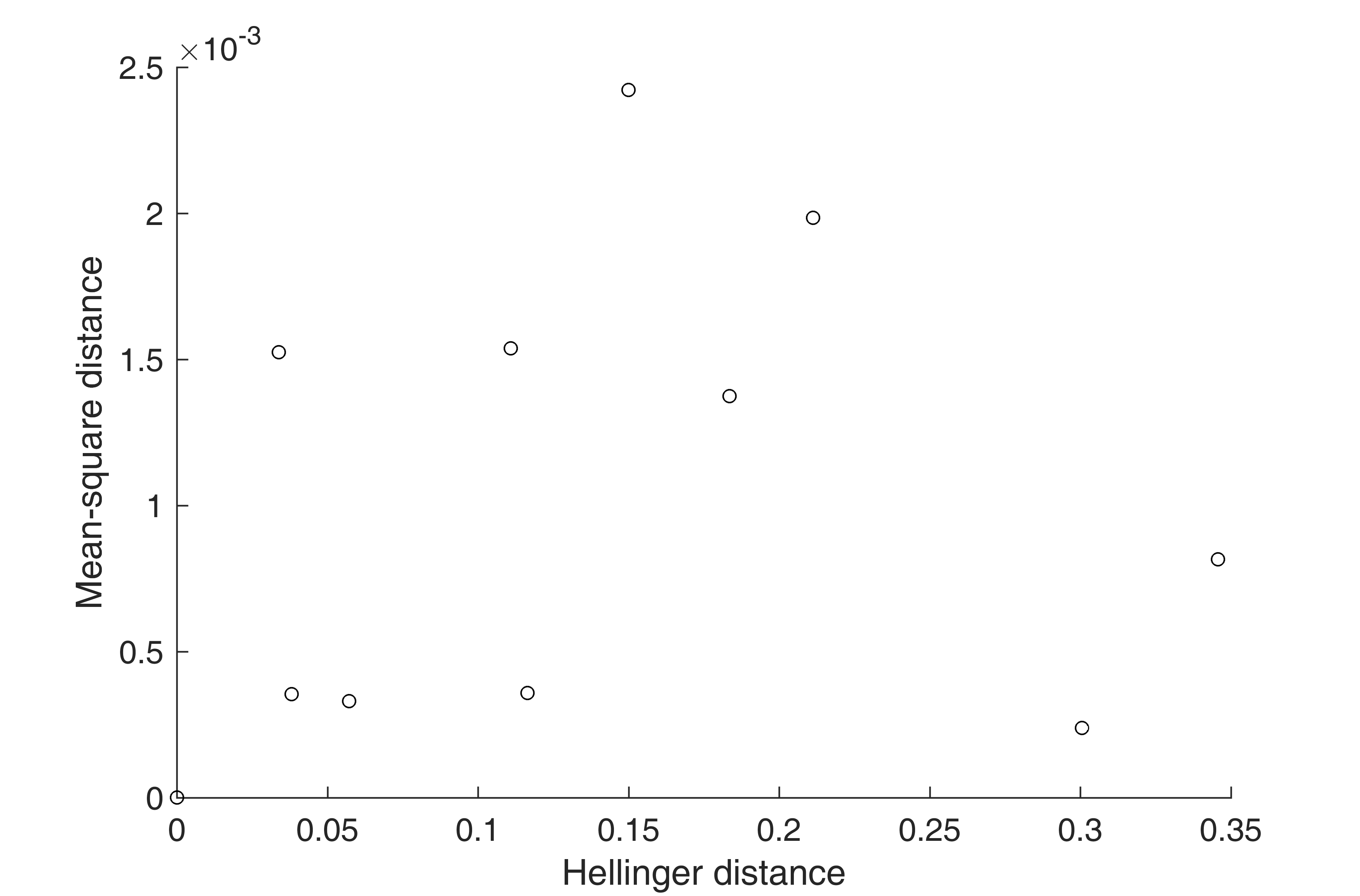}\\  (a)\qquad\qquad\qquad\qquad\qquad\qquad\qquad\qquad\qquad\qquad\quad (b) 
  \caption{Mean-sequare distance between quantizer representation points as a function of the Hellinger distance between physical environment distributions for all pairs of agents in Fig.~\ref{fig:example1}. (a) Quantizer design only considers physical environments. (b) Quantizer design in equilibrium.}
  \label{fig:ex1_r}
\end{figure}

\section{Conclusion}
Symbolic communication requires establishing a vocabulary with a clear association
between the external world and its representation. Unlike traditional biological evolution where organisms
become adapted to their local niches (and unlike traditional quantization theory where codebooks are adapted to
local source distributions), language evolution is necessarily a social phenomenon, since without social interaction 
there is no need for shared vocabularies. In this
work, we have argued that vocabularies evolve to balance individual concerns and social exchange.

As far as we know, we have put forth a first general game-theoretic formulation of quantization theory.
By thinking about the interaction among several connected agents, several novel quantization problems beyond the one
we studied are suggested.

Going forward, we are interested in studying how changes in the communication
frequencies between peers lead to changes in vocabularies and how the birth and death of agents in
the network impact vocabulary evolution. This will help understand the evolution and future of
human languages in a more connected world, and at a smaller scale, the evolution of vocabularies in
collaborative tagging on social media platforms such as del.icio.us \cite{RobuHS2009,Danescu-Niculescu-MizilWJLP2013}. We are also interested in
settings where agents are organizations rather than individuals, and in particular how to ensure
vocabularies evolve to yield clear, concise communication in cross-enterprise collaborations \cite{GulatiWZ2012,VarshneyO2011b}.
Finally, applications in language formation for robotic teams and even human-robot collaboration networks \cite{GleesonMHCA2013}, as well as related questions in unsupervised learning, are 
of interest.

\bibliographystyle{IEEEtran}
\bibliography{abrv,conf_abrv,mani_lib}
\end{document}